\newtheorem{theorem}{Theorem}[section]
\newtheorem{corollary}[theorem]{Corollary}
\newtheorem{lemma}[theorem]{Lemma}
\newtheorem{proposition}[theorem]{Proposition}
\newtheorem{example}[theorem]{Example}
\newtheorem{remark}[theorem]{Remark}
\newtheorem{conjecture}[theorem]{Conjecture}
\newcommand{\generate}[2]{\langle #1 \rangle_{\mathsf{#2}}}
\newcommand{\mf}[1]{\mathfrak{#1}}
\newcommand{\C}{\mathbb{C}}
\newcommand{\R}{\mathbb{R}}
\newcommand{\iu}{\mathrm{i}\mkern1mu}
\newcommand{\down}{\shortdownarrow}
\newcommand{\diag}{{\rm diag}}
\newcommand{\id}{{\mathds1}}
\newcommand{\conv}{\operatorname{conv}}
\newcommand{\SU}{{\rm SU}}
\newcommand{\tr}{{\rm tr}}
\newcommand{\spec}{{\rm spec}}
\newcommand{\Ad}{{\rm Ad}}
\newcommand{\ad}{{\rm ad}}
\newcommand{\reach}{\mathsf{reach}}
\newcommand{\derv}{\mathsf{derv}}
\newcommand{\ketbra}[1]{\ket{#1}\!\bra{#1}}
\newcommand{\ketbrax}[2]{\ket{#1}\!\bra{#2}}
\renewcommand{\epsilon}{\varepsilon}
\newcommand{\e}{\mathbf e}
\newcommand{\wkl}{{\mf w_{\sf{GKSL}}}}
\newcommand{\tge}{\trianglerighteq}
\newcommand{\me}{\mathrm{e}}
\title{\LARGE \bf Provably Time-Optimal Cooling of Markovian Quantum Systems}
\author{Emanuel Malvetti
\thanks{E. Malvetti is with the School of Natural Sciences, Technische Universität München, Garching,
85737, Germany,
the Munich Center for Quantum Science and Technology (MCQST) \&
the Munich Quantum Valley (MQV).}
}
\begin{document}

\maketitle
\thispagestyle{empty}

\begin{abstract}
We address the problem of cooling a Markovian quantum system to a pure state in the shortest amount of time possible. 
Here the system drift takes the form of a Lindblad master equation and we assume fast unitary control.
This setting allows for a natural reduction of the control system to the eigenvalues of the state density matrix.
We give a simple necessary and sufficient characterization of systems which are (asymptotically) coolable and present a powerful result which allows to considerably simplify the search for optimal cooling solutions.
With these tools at our disposal we derive explicit provably time-optimal cooling protocols for rank one qubit systems, inverted $\Lambda$-systems on a qutrit, and a certain system consisting of two coupled qubits.
\end{abstract}



\section{INTRODUCTION}

Cooling quantum mechanical systems to a well-defined ground state is an essential task in quantum information technologies such as quantum computing~\cite{VincCriteria}.
For trapped ions or cold atoms, the preferred method is laser cooling, such as Doppler cooling, Sisyphus cooling~\cite{Wineland92}, Raman cooling~\cite{Monroe95b} and many others, as well as using strong coupling~\cite{Machnes10} or aided by optimal control~\cite{XQLi21}.
Another popular approach is using algorithmic cooling~\cite{SMW05,Boykin02,Popp06,Alhambra19}.
In this paper we use reduced control systems~\cite{Reduced23,LindbladReduced23} together with quantum optimal control theory~\cite{dAless21,DiHeGAMM08} to derive {\em provably time-optimal} schemes for cooling Markovian quantum systems.
Such Markovian systems are described by a time-independent master equation of \textsc{gks}--Lindblad form~\cite{GKS76,Lindblad76}. 
Furthermore, in the systems of concern, we assume that unitary control is fast compared to dissipation.
Corresponding results can be obtained assuming that the noise itself is switchable as in the experimental set-up of~\cite{Mart14,McDermott_TunDissip_2019}.

The main tool used here is going from
a full bilinear control system~\cite{Jurdjevic97,Elliott09} to a reduced one just
describing the dynamics of the eigenvalues of the state.
Such a 
reduced control system in a 
Lindbladian setting has first been formulated in~\cite{Sklarz04}, in~\cite{Yuan10} for a single qubit, and it has been studied in~\cite{rooney2018}. Similar ideas were 
used in the single-qubit setting in~\cite{Lapert10}.
Beyond single qubits, these tools (see also~\cite{CDC19,OSID23,vE_PhD_2020,MTNS2020_1}) have been vastly generalized and made mathematically rigorous in~\cite{Reduced23} and further explored in the Markovian setting in~\cite{LindbladReduced23} and applied to quantum entanglement in~\cite{BipartiteReduced24}. 
The reduced control system shifts the viewpoint naturally to the achievable derivatives of the eigenvalues and thus
invites the use of methods from the theory of {\em differential inclusions}.

\subsection*{Outline}

After sketching theoretical tools we focus on applications to illustrative low-dimensional examples in the language of a quantum engineer.
Yet the methods are general, readily carry over to higher dimensions and match with numerical methods.
Notably our time-optimal controls are obtained {\em without} invoking the Hamilton--Jacobi--Bellman Equation or the Pontryagin Maximum Principle. ---
The paper is organised as follows: 
Sec.~\ref{sec:reduced} recalls the definition of the reduced control system while
Sec.~\ref{sec:coolable} characterizes coolable systems in simple algebraic terms.
The methods are illustrated in Sec.~\ref{sec:qubit} by solving the rank-one qubit case is detail.
Sec.~\ref{sec:derivs} goes on to study achievable derivatives in higher dimensions and introduces two systems to be solved in the subsequent sections.
Sec.~\ref{sec:majorization} presents a method of reducing the achievable derivatives to a subset of optimal derivatives, and finally
Sec.~\ref{sec:opt-cooling} determines optimal solutions to the higher dimensional system introduced in Sec.~\ref{sec:derivs}.

\section{PRELIMINARIES}

Throughout the paper we work on the finite dimensional Hilbert space $\C^n$, and we denote the set of skew-Hermitian matrices using the unitary Lie algebra $\mf{u}(n)$.
Quantum states are represented by density matrices, that is, positive semi-definite matrices of unit trace, denoted $\mf{pos}_1(n)$.

\subsection{Full control system}

Markovian quantum systems are characterized by the \emph{\textsc{gks}--Lindblad equation}~\cite{GKS76,Lindblad76} which takes the form 
$$
\textstyle \dot\rho=-L(\rho) = -\iu[H_0,\rho]-\sum_{k=1}^r\Gamma_{V_k}(\rho),
$$ 
where
$\textstyle -\Gamma_V(\rho)=V\rho V^* -\tfrac12(V^*V\rho+\rho V^*V)$.
The \emph{Hamiltonian} $H_0\in\iu\mf{u}(n)$ is a Hermitian matrix and the \emph{Lindblad terms} $\{V_k\}_{k=1}^r\subset\C^{n,n}$ are arbitrary matrices. 
We call $-L$ the \emph{Kossakowski--Lindblad generator}\footnote{The signs are chosen such that the real parts of the eigenvalues of $-L$ are non-positive.}, and we denote the set of all Kossakowski--Lindblad generators in $n$-dimensions by $\wkl(n)$, called the \emph{Kossakowski--Lindblad Lie wedge}, cf.~\cite{DHKS08}.

The following definition encapsulates what we mean by a Markovian quantum system with fast unitary control.
Let $\{H_j\}_{j=1}^m$ be a set of Hermitian matrices, called \emph{control Hamiltonians}, and $I$ an interval of the form $[0,T]$ or $[0,\infty)$.
In the following we use the notation $\ad_H(\cdot)=[H,\cdot\,]$.
A path $\rho:I\to\mf{pos}_1(n)$ of density matrices is a solution of the full bilinear control system~\cite{Jurdjevic97,Elliott09}
\begin{align} \label{eq:bilinear} \tag{\sf F}
\textstyle \dot\rho(t) = 
-\big(\iu \sum_{j=1}^m u_j(t) \ad_{H_j}+L\big)(\rho(t)), \quad
\end{align}
with initial state $\rho(0)=\rho_0\in\mf{pos}_1(n)$ and with locally integrable control functions $u_j:I\to\R$ if $\rho$ is absolutely continuous and satisfies~\eqref{eq:bilinear} almost everywhere. 
We will always assume that the control Hamiltonians generate at least the special unitary Lie algebra: 
$$\generate{\iu H_j:j=1,\ldots,m}{Lie}\supseteq\mf{su}(n).$$
Since the full bilinear control system~\eqref{eq:bilinear} allows for unbounded control functions and since the control Hamiltonians generate the entire special unitary Lie algebra---meaning that we have fast unitary control---we can move arbitrarily quickly within the unitary orbits. Thus we may concentrate on the dynamics of the eigenvalues of the state.

\subsection{Definition of cooling} \label{sec:cost}

Since we assume to have fast unitary control over the system, any pure state can be transformed into any other pure state at no cost.
Hence any pure state can also be transformed into the ground state of the system Hamiltonian.
For this reason we will equate cooling the system with purifying it.
Moreover, due to the exponential nature of the \textsc{gks}--Lindblad equation pure states can only be reached asymptotically.
Thus one has to clarify what is meant by ``cooling the system in the shortest amount of time possible''.
In particular one has to define a cost (resp.\ reward) function.
The task then becomes to minimize (resp.\ maximize) this measure in a given time, or to reach a certain value in the shortest possible amount of time.

Some common examples of such measures are the purity of the state, the von Neumann entropy, the largest eigenvalue (which is the maximum fidelity with a pure state) or the minimum energy (with respect to some Hamiltonian with non-degenerate ground state). 
An important property of these functions is that they are Schur-convex (or concave), i.e.\ they are monotone with respect to majorization. The details are given in Sec.~\ref{sec:majorization}.

\section{REDUCED CONTROL SYSTEM} \label{sec:reduced}

The main tool for determining optimal cooling procedures used in this paper is the reduction of the full control system~\eqref{eq:bilinear} to a reduced control system describing the evolution of the eigenvalues of the density matrix $\rho$ representing the quantum state.
The general theory was established in~\cite{Reduced23} and applied in~\cite{LindbladReduced23} to investigate reachable and stabilizable states in Markovian quantum systems.

Since we have fast control over the unitary group, and since two density matrices have the same spectrum if and only if they lie on the same unitary orbit, we may concentrate on the dynamics of the eigenvalues of the state. Thus the reduced state space will be the standard simplex
\begin{align*}
\Delta^{n-1}=\big\{(\lambda_1,\ldots,\lambda_n)\in\R^n : \textstyle\sum_{i=1}^n \lambda_i=1,\, \lambda_i\geq0\,\; \forall i\big\}\,,
\end{align*}
representing the subset of diagonal density matrices.
More precisely a point $\lambda\in\Delta^{n-1}$ defines a unique diagonal density matrix $\rho=\diag(\lambda)\in\mf{pos}_1(n)$.
For the reverse direction, it is convenient to choose an ordering of the eigenvalues.
Let $\Delta^{n-1}_\down$ denote subset of $\Delta^{n-1}$ where the components $\lambda_i$ are arranged in a non-increasing manner.
Then we can define the map $\spec^\down:\mf{pos}_1(n)\to\Delta^{n-1}_\down$ which maps a density matrix to the vector containing its eigenvalues in non-increasing order.
The vertices of $\Delta^{n-1}$ are the standard basis vectors $e_i$ for $i=1,\ldots,n$ and the center is $\e=(1,\ldots,1)/n$.

Now the dynamics of the reduced control system are defined on the simplex $\Delta^{n-1}$ using the \emph{induced vector fields} $-L_U$ which can be defined as
\begin{equation*}
-L_U=J(U)-\diag(J(U)^\top\e),
\end{equation*}
where $J(U)_{ij}=\sum_{k=1}^r|\langle i|U^*V_kU|j\rangle|^2$.
A function $\lambda:I\to\Delta^{n-1}$ is a solution of the \emph{reduced control system}
\begin{align}
\label{eq:reduced} \tag{\sf R}
\dot\lambda(t) = -L_{U(t)} \lambda(t)\,,
\quad \lambda(0)=\lambda_0\in\Delta^{n-1}
\end{align}
with measurable control function $U:I\to\SU(n)$, if $\lambda$ is absolutely continuous and satisfies~\eqref{eq:reduced} almost everywhere. 

\begin{remark}
In~\cite{LindbladReduced23} additional definitions of the reduced control system are given.
Due to Filippov's Theorem \cite[Thm.~2.3]{Smirnov02} the system can equivalently be defined as a differential inclusion $\dot\lambda(t) \in \{-L_U\lambda(t):U\in\SU(n)\}$, which is a more geometric way of describing the system.
Moreover one may take the convex hull of the set of achievable derivatives, as solutions to this \emph{relaxed control system}
\begin{equation} \label{eq:relaxed} \tag{\sf C}
\dot\lambda(t) \in \mf \conv(-L_U\lambda(t))\,,
\quad \lambda(0)=\lambda_0\in\Delta^{n-1}
\end{equation}
can still be uniformly approximated on compact time intervals by solutions to~\eqref{eq:reduced}.
This is the content of the Relaxation Theorem~\cite[Ch.~2.4, Thm.~2]{Aubin84}.
\end{remark}

The reduced control system~\eqref{eq:reduced} is indeed equivalent to the full control system~\eqref{eq:bilinear} in the precise sense of the Equivalence Theorem~\cite[Thm.~2.4]{LindbladReduced23}. 
In particular no loss of information is incurred by switching to the reduced control system.
A convenient way to state the equivalence is through reachable sets.

First we recall the definition of reachable sets for~\eqref{eq:bilinear}. The definitions for~\eqref{eq:reduced} are entirely analogous. The \emph{reachable set of $\rho_0$ at time $T\geq0$} is defined as
$$
\reach_{\ref{eq:bilinear}}(\rho_0,T) =
\{\rho(T):\rho \text{ solves } \eqref{eq:bilinear},\, \rho(0)=\rho_0\}.
$$
Similarly the \emph{all-time reachable set of $\rho_0$} is defined as
$\reach_{\ref{eq:bilinear}}(\rho_0) =\bigcup_{T\geq0}\reach_{\ref{eq:bilinear}}(\rho_0,T)$.
Then we can state the result as:

\begin{proposition} \label{prop:reach-equiv}
Given any $-L\in\wkl(n)$, 
let $\rho_0\in\mf{pos}_1(n)$ and $\lambda_0\in\Delta^{n-1}$ be such that $\spec^\down(\rho_0)=\lambda^\down_0$.
Then it holds for all $T>0$ that
\begin{align*}
\overline{\reach_{\ref{eq:bilinear}}(\rho_0,T)}
= 
\overline{\{U\lambda U^*:\lambda\in\reach_{\ref{eq:reduced}}(\lambda_0,T),U\in\SU(n)\}}.
\end{align*}
\end{proposition}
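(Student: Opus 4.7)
The proposition is a reachable-set reformulation of the Equivalence Theorem~[LindbladReduced23, Thm.~2.4], and the plan is to deduce it from the two basic compatibility statements between trajectories of~\eqref{eq:bilinear} and~\eqref{eq:reduced}, exploiting that fast unitary control collapses the full dynamics onto its eigenvalue quotient. First I would observe that both sides of the claimed equality are invariant under the $\SU(n)$-conjugation action on the target: the right-hand side by construction, and the left-hand side because the control Hamiltonians generate $\mathfrak{su}(n)$ and the controls are unbounded, so any two states on the same unitary orbit can be interchanged in arbitrarily short time (hence they lie in each other's closures of reachable sets). Consequently it suffices to match the diagonal slices of both sets, and the role of $\spec^\down(\rho_0)=\lambda_0^\down$ is merely to align the starting orbits.

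For the inclusion $(\supseteq)$, I would take a trajectory $\lambda(\cdot)$ of~\eqref{eq:reduced} with measurable control $U(\cdot)$ ending at $\lambda(T)=\lambda$, pick an arbitrary $U\in\SU(n)$, and construct a sequence of solutions $\rho_k(\cdot)$ of~\eqref{eq:bilinear} with $\rho_k(0)=\rho_0$ and $\rho_k(T)\to U\diag(\lambda)U^*$. The standard construction partitions $[0,T]$ into short slabs on which $U(t)$ is essentially constant, say $U(t)\equiv U_\ell$; on each slab one realises the induced vector field $-L_{U_\ell}$ by using very large control amplitudes to rapidly conjugate the state into the $U_\ell$-frame, letting the drift $-L$ act for the duration of the slab, and rapidly rotating back. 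The conjugation steps use vanishing time in the limit, so the induced eigenvalue dynamics converge to $\dot\lambda=-L_{U(t)}\lambda$; an extra fast rotation at the end time places the limit on any chosen unitary orbit representative. Passing to the closure (as the slab size and rotation time scales tend to zero) completes this inclusion.

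For the inclusion $(\subseteq)$, I would take $\rho(\cdot)$ a solution of~\eqref{eq:bilinear} and follow the eigenvalues $\lambda(t)=\spec^\down(\rho(t))$. After choosing an absolutely continuous diagonalising frame $W(t)\in\SU(n)$ with $\rho(t)=W(t)\diag(\lambda(t))W(t)^*$, a direct computation shows that the unitary control and frame-derivative contributions are purely off-diagonal (they commute with $\diag$ only through the skew part) and therefore cancel when taking the diagonal of $\dot\rho$, leaving $\dot\lambda(t)=-L_{W(t)}\lambda(t)$ almost everywhere. Hence $\lambda(\cdot)$ solves~\eqref{eq:reduced} (in the differential-inclusion sense), and writing $\rho(T)=W(T)\diag(\lambda(T))W(T)^*$ places $\rho(T)$ in the right-hand side.

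The main obstacle is handling the two sources of non-smoothness rigorously. On the $(\supseteq)$ side the fast-unitary averaging argument is a chattering-control limit, which is where the closure on the left appears; it is most cleanly justified by first working with the relaxed system~\eqref{eq:relaxed} and invoking the Relaxation Theorem~\cite[Ch.~2.4, Thm.~2]{Aubin84}. On the $(\subseteq)$ side the frame $W(t)$ may fail to be absolutely continuous at eigenvalue crossings, where $\spec^\down$ is only Lipschitz; this is bypassed by a small generic perturbation of the path so that crossings are avoided, followed by a closure argument, or equivalently by phrasing the identification directly in terms of the differential inclusion associated to~\eqref{eq:reduced}.
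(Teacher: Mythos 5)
Your overall strategy is sound and, in structure, coincides with the argument the paper itself relies on: the paper does not prove Proposition~\ref{prop:reach-equiv} in-house but defers to the Equivalence Theorem of~\cite{LindbladReduced23} and the general proof in~\cite{Reduced23}, whose content is exactly your two inclusions --- projection of full solutions onto ordered eigenvalues for $\subseteq$, and a chattering/fast-rotation approximation (with the Relaxation Theorem and the $\SU(n)$-invariance of the closed reachable sets) for $\supseteq$. Those parts of your sketch are correct and would compile into a proof once the technical estimates for the fast rotations (vanishing slab and rotation times, bookkeeping so that the total time is exactly $T$) are written out; the closures on both sides absorb these limits, as you say.

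The one step that would fail as stated is your treatment of eigenvalue crossings in the $\subseteq$ direction. You propose ``a small generic perturbation of the path so that crossings are avoided, followed by a closure argument,'' but a trajectory of~\eqref{eq:bilinear} cannot be perturbed at will and still remain a trajectory, and degeneracies need not be removable or even negligible in time: if $\rho_0$ is degenerate (e.g.\ the maximally mixed state, or any state with repeated eigenvalues preserved by the drift), the spectrum can stay degenerate on a whole interval, so no ``generic'' perturbation of the path avoids crossings, and a closure argument on the reduced side does not obviously recover the projected curve as a limit of crossing-free solutions. This is precisely the delicate point that the cited proof handles head-on: one uses that $\spec^\down$ is Lipschitz, hence $t\mapsto\lambda(t)=\spec^\down(\rho(t))$ is absolutely continuous, and shows that at a.e.\ $t$ its derivative equals $-L_{W(t)}\lambda(t)$ for a measurable choice of diagonalizing frame $W(t)$ --- the Hamiltonian and frame-derivative terms drop out of the diagonal in the eigenbasis even at degeneracies, but establishing this a.e.\ identity (rather than assuming an absolutely continuous, crossing-free frame) is the actual work. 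Your fallback remark about ``phrasing the identification directly in terms of the differential inclusion'' points in the right direction, but as written it replaces the missing argument rather than supplying it; to close the gap you should either carry out the a.e.\ eigenvalue-derivative computation with a measurable frame or invoke the corresponding statement of~\cite{Reduced23} explicitly.
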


\smallskip 

The analogous result also holds for the all time reachable sets.
A general proof is provided in~\cite{Reduced23}.

To be of practical use one needs a way to lift solutions from the reduced control system to the full one.
In particular one needs a method to determine corresponding control functions.
A general result is given in~\cite[Prop.~3.10]{Reduced23}, and we will apply it to some concrete examples in Sections~\ref{sec:qubit} and~\ref{sec:opt-cooling}.

\section{ASYMPTOTICALLY COOLABLE SYSTEMS} \label{sec:coolable}

The first question to ask is whether the system under consideration is coolable at all, by which we mean that a pure quantum state state can be reached from every given initial state, at least in an asymptotic sense.
It turns out that such systems can be characterized nicely in an algebraic way, cf.~\cite[Thm.~4.7]{LindbladReduced23}.

\begin{theorem} \label{thm:asymptotic-coolability}
Given any Kossakowski--Lindblad generator $-L\in\wkl(n)$,
the following are equivalent.\smallskip
\begin{enumerate}[(i)]
\item \label{it:cool-evec} For each choice of Lindblad terms $\{V_k\}_{k=1}^r$ of $-L$,
there exists a common eigenvector of all $V_k$ which is not a common left eigenvector.
\item \label{it:cool-ham} There exists a (time-independent) Hamiltonian $H$ such that $-(\iu\ad_H + L)$ has a (unique) attractive fixed point\footnote{We say that $\rho$ is an attractive fixed point if every solution converges to $\rho$. If such an attractive fixed point exists, it is clearly unique.}, and this fixed point is pure.
\item \label{it:cool-conv} For every initial state, there exists some solution to~\eqref{eq:reduced} converging to $e_1$.
\item \label{it:cool-reach-some} There exists $\lambda\in\Delta^{n-1}\setminus\{e_1\}$ such that $e_1\in\overline{\reach_{\ref{eq:reduced}}(\lambda)}$.
\end{enumerate}
\end{theorem}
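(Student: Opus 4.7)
The plan is to close the cycle $(\ref{it:cool-evec}) \Rightarrow (\ref{it:cool-ham}) \Rightarrow (\ref{it:cool-conv}) \Rightarrow (\ref{it:cool-reach-some}) \Rightarrow (\ref{it:cool-evec})$, using Proposition~\ref{prop:reach-equiv} as the bridge between the full and reduced control systems. The steps $(\ref{it:cool-ham}) \Rightarrow (\ref{it:cool-conv})$ and $(\ref{it:cool-conv}) \Rightarrow (\ref{it:cool-reach-some})$ are routine: the former just folds the Hamiltonian $H$ from $(\ref{it:cool-ham})$ into the drift, takes constant unitary control, and projects the convergent full-system trajectories to $e_1 = \spec^\down(\ketbra{\psi})$ via Proposition~\ref{prop:reach-equiv}; the latter is immediate from the definition of the closed reachable set.

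For $(\ref{it:cool-evec}) \Rightarrow (\ref{it:cool-ham})$, let $\ket{\psi}$ be a common right eigenvector with $V_k\ket{\psi} = \lambda_k\ket{\psi}$, and decompose $V_k^*\ket{\psi} = \bar\lambda_k\ket{\psi} + \ket{w_k}$ with $\ket{w_k}\perp\ket{\psi}$. A direct computation shows that $-L(\ketbra{\psi})$ vanishes on the $(\psi,\psi)$ and the $(j,j')$ blocks with $j,j'\neq\psi$, leaving only coherences $\frac12\lambda_k\ket{w_k}\bra{\psi}$ and their Hermitian conjugates; these can be cancelled by choosing $H$ with the appropriate off-diagonal entries, making $\ketbra{\psi}$ a fixed point of $-(\iu\ad_H+L)$. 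Attractivity then follows from the Lyapunov function $f(\rho) := \bra{\psi}\rho\ket{\psi}$; a Heisenberg-picture calculation using the same cancellation yields
\[
\tfrac{d}{dt} f(\rho) \;=\; \sum_k \bra{w_k}\rho\ket{w_k} \;\geq\; 0,
\]
with equality exactly on states supported in $W^\perp$, where $W:=\linspan\{\ket{w_k}\}$. The hypothesis that $\ket{\psi}$ is not a common left eigenvector forces $W\neq\{0\}$, and a LaSalle-type invariance argument, iterated on the strictly smaller invariant subspace $W^\perp$, yields convergence of every trajectory to $\ketbra{\psi}$.

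The hard part will be $(\ref{it:cool-reach-some}) \Rightarrow (\ref{it:cool-evec})$, which I would prove by contraposition. At any pure state $\ketbra{\phi}$ one computes
\[
\tfrac{d}{dt}\lambda_{\max} \;=\; -\sum_k\bigl(\|V_k\ket{\phi}\|^2 - |\bra{\phi}V_k\ket{\phi}|^2\bigr),
\]
which by Cauchy--Schwarz is non-positive and vanishes precisely when $\ket{\phi}$ is a common eigenvector of the $V_k$. If no common eigenvector exists, this quantity is strictly negative and, by continuity together with compactness of the set of pure states, bounded uniformly away from zero in a neighbourhood of $e_1$; a continuity argument then prevents any trajectory of~\eqref{eq:reduced} from accumulating at $e_1$. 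If every common eigenvector $\ket{\psi}$ is also a common left eigenvector, then $V_k = \lambda_k\ketbra{\psi} + V_k'$ with $V_k'$ acting on $\ket{\psi}^\perp$; the observable $\bra{\psi}\rho\ket{\psi}$ is then conserved under both the dissipator and an arbitrary Hamiltonian that preserves $\ket{\psi}$, so no genuinely mixed initial spectrum can approach $e_1$ (and after exhausting the finite set of simultaneous left-right eigenvectors by unitary freedom, the obstruction recurs). The principal technical difficulty is ruling out oscillatory behaviour where the spectrum comes arbitrarily close to $e_1$ without converging; I would address this by extracting from the reachable sequence a subsequence of states whose top eigenvectors converge to some $\ket{\phi_*}$, then invoking the instantaneous-derivative obstruction at the limiting pure state $\ketbra{\phi_*}$ to derive a contradiction.
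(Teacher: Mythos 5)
Your cycle is well chosen and two of the four arrows are fine, but note first that the paper itself does not prove this theorem --- it is quoted from~\cite[Thm.~4.7]{LindbladReduced23} --- so your attempt must stand on its own, and it has two genuine gaps. The first is in (\ref{it:cool-evec})$\Rightarrow$(\ref{it:cool-ham}). Your computation that the dissipator leaves only the coherences $-\tfrac12\sum_k\big(\lambda_k\ketbrax{w_k}{\psi}+\bar\lambda_k\ketbrax{\psi}{w_k}\big)$ at $\ketbra\psi$, and the Lyapunov identity $\tfrac{d}{dt}\bra\psi\rho\ket\psi=\sum_k\bra{w_k}\rho\ket{w_k}$ for the coherence-cancelling $H$, are both correct; but invariance plus monotonicity of $f$ does not give attractivity, and the ``iterated LaSalle'' step is exactly where the content lies and it fails for your choice of $H$. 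Concretely, take a qutrit with the single Lindblad term $V=\ketbrax{1}{2}$ and $\psi=\ket{1}$: this satisfies (\ref{it:cool-evec}) (eigenvalue $0$, not a left eigenvector), your construction gives $H=0$, yet $\ketbra{3}$ is also a fixed point, so $\ketbra{1}$ is not attractive. The set $\{\rho:\supp\rho\subseteq W^\perp\}$ where $\dot f=0$ generally contains invariant states other than $\ketbra\psi$, and no iteration of LaSalle removes them; one must instead \emph{enlarge} $H$ by a part supported on $\psi^\perp$ (here, a coupling of $\ket3$ to $\ket2$) chosen so that no invariant state is supported on $\psi^\perp$, and prove that such a choice always exists. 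That construction and the accompanying invariant-set analysis are the heart of this implication and are missing.

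The second gap is in (\ref{it:cool-reach-some})$\Rightarrow$(\ref{it:cool-evec}). Your Case A (no common eigenvector) is sound: the formula $\dot\lambda_1=-\sum_k(\|V_k\ket\phi\|^2-|\bra\phi V_k\ket\phi|^2)$ at the vertex, compactness of the pure states and of $\SU(n)$, and continuity do keep $e_1$ out of $\overline{\reach_{\ref{eq:reduced}}(\lambda)}$. But Case B (common eigenvectors exist and all are left eigenvectors) is not proved by your argument: conservation of $\bra\psi\rho\ket\psi$ holds only for Hamiltonians preserving $\ket\psi$, while~\eqref{eq:reduced} permits arbitrary unitary controls, which do change $\bra\psi\rho\ket\psi$ (just not the spectrum), so this quantity is not an obstruction as stated; moreover, precisely at and near such a $\psi$ the first-order decrease of $\lambda_1$ vanishes, so your uniform-negativity/subsequence argument no longer excludes trajectories whose largest eigenvalue creeps toward $1$ with top eigenvector approaching $\psi$. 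A different mechanism is needed here, e.g.\ exploiting recursively the block structure $V_k=\lambda_k\ketbra\psi\oplus V_k'$ to obtain a unital-like/majorization monotonicity on the relevant block, and one must also respect the quantifier in (\ref{it:cool-evec}): its negation only provides \emph{one} choice of Lindblad terms with the left-eigenvector property, and the set of common eigenvectors need not be finite, so ``exhausting'' them is not an argument. As it stands, (\ref{it:cool-ham})$\Rightarrow$(\ref{it:cool-conv}) via Proposition~\ref{prop:reach-equiv} and (\ref{it:cool-conv})$\Rightarrow$(\ref{it:cool-reach-some}) are fine, but the two implications above need substantially more work.
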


An efficient algorithmic way of determining whether a set of matrices $V_k$ has a common eigenvector, and, if so, of finding such an eigenvector, is presented in~\cite{Triangulation23}.

In the following we will only deal with asymptotically coolable systems. If the system is not coolable, one first has to determine which states are reachable, and which of them is the coolest by some appropriate measure.
Several results in this direction can be found in~\cite{OSID23,LindbladReduced23}.

\section{\/`WARM-UP\/': OPTIMAL COOLING OF A QUBIT} \label{sec:qubit}

The simplest special case is that of a single qubit. 
In this section we focus on rank one systems, which are defined by a single Lindblad term $V$.
These systems are not trivial, but they still allow for a complete description.
The general qubit case (including non-coolable systems) is treated in~\cite{QubitReduced24}.
The solution obtained here shares some similarity with the solution obtained for the special case of the Bloch equations~\cite{Lapert10,QubitReduced24}, which is however simpler to solve as it has a symmetry which allows to reduce the dimensionality of the problem.

For later use we define the Pauli matrices 
$$
\sigma_x=\begin{pmatrix}0&1\\1&0\end{pmatrix},\quad
\sigma_y=\begin{pmatrix}0&-\iu\\\iu&0\end{pmatrix},\quad
\sigma_z=\begin{pmatrix}1&0\\0&-1\end{pmatrix}.
$$ 

Let $V\in\C^{2,2}$ be an arbitrary Lindblad term.
First we have to check whether such a system is coolable at all.
Indeed, Theorem~\ref{thm:asymptotic-coolability} shows that the system is asymptotically coolable if and only if $V$ is not normal (equivalently if and only if $-L$ is not unital).

At a first glance there are eight real parameters defining the problem but the following result shows that all but one parameter can be eliminated.

\begin{lemma} \label{lemma:reduced-parameters}
Let $V\in\C^{2,2}$ be an arbitrary non-normal matrix.
Then there exists a (special) unitary matrix $\tilde U$, a Hermitian matrix $\tilde H$ and numbers $\gamma>0$ and $\nu\in[0,1)$ with 
$$
\tilde H = \tfrac\iu4(\tr(V^*)V-\tr(V)V^*), 
\quad 
\tilde V = \begin{pmatrix}0&1\\\nu&0\end{pmatrix},
$$
such that $\Gamma_V = \iu\ad_{\tilde H} + \gamma\Gamma_{\tilde U^*\tilde V\tilde U}$. 
\end{lemma}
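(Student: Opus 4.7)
The plan is to split the lemma into two independent reductions: first, eliminate the trace of $V$ and absorb the discrepancy into a Hamiltonian correction identified with $\tilde H$; second, bring the resulting traceless Lindblad term into the prescribed canonical anti-diagonal form by unitary conjugation and rescaling. The bridge between them is the elementary identity $\Gamma_{\alpha X}=|\alpha|^2\Gamma_X$ for scalar $\alpha\in\C$, together with the observation that adding a multiple of the identity to a Lindblad term shifts $\Gamma$ only by a Hamiltonian commutator.

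For the first step, I will set $c := \tr(V)/2$ and $V' := V - cI$; this $V'$ is traceless and remains non-normal because $[V',V'^*] = [V,V^*]$. Expanding $V\rho V^* - \tfrac12\{V^*V,\rho\}$ in terms of $V'$ and $cI$, the $|c|^2\rho$ contributions cancel and the mixed terms collapse into a single commutator, yielding $\Gamma_V - \Gamma_{V'} = -\tfrac12[\bar c V' - cV'^*,\,\cdot\,]$. Recognising the right-hand side as $\iu\ad_{H_0}$ with $H_0 = \tfrac{\iu}{2}(\bar c V' - cV'^*)$, then substituting $V' = V - cI$ (where the $cI$ contributions drop out since $c\bar c=\bar c c$) and $c = \tr(V)/2$ reduces $H_0$ to the stated $\tilde H = \tfrac{\iu}{4}(\tr(V^*)V - \tr(V)V^*)$.

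For the second step, I aim to produce a unitary $W$ and scalars $\gamma>0$, $\nu\in[0,1)$ such that $V' = \alpha\sqrt\gamma\,W\tilde V W^*$ for some unit-modulus $\alpha$; then $\Gamma_{V'} = \gamma\,\Gamma_{W\tilde V W^*}$ by the scalar identity, and taking $\tilde U := W^*$ (adjusted by a $(\det W)^{1/2}$ phase to lie in $\SU(2)$) gives the form in the lemma. Since $V'$ is traceless its two eigenvalues sum to zero, and since the numerical range of a $2\times 2$ matrix is convex (Toeplitz--Hausdorff) and contains both eigenvalues, it also contains their midpoint $0$. Choose a unit vector $v_1$ with $\langle v_1,V'v_1\rangle=0$ and an orthonormal completion $v_2$, which automatically satisfies $\langle v_2,V'v_2\rangle=0$; in the basis $(v_1,v_2)$ the matrix $V'$ is anti-diagonal with entries $s := \langle v_1,V'v_2\rangle$ and $t := \langle v_2,V'v_1\rangle$. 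The identities $V'V'^* = \diag(|s|^2,|t|^2)$ and $V'^*V' = \diag(|t|^2,|s|^2)$ show that such a matrix is normal iff $|s|=|t|$, so non-normality forces $|s|\neq|t|$; swapping $v_1\leftrightarrow v_2$ if necessary, I may assume $|s|>|t|\geq 0$. Diagonal conjugation by $\diag(e^{\iu\theta},e^{-\iu\theta})$ scales $s$ by $e^{2\iu\theta}$ and $t$ by $e^{-2\iu\theta}$, while the global rescaling $V'\mapsto e^{\iu\phi}V'$ multiplies both by $e^{\iu\phi}$ without affecting $\Gamma_{V'}$; combining these two freedoms lets me make $s$ and $t$ simultaneously real and non-negative, at which point $\gamma := s^2$ and $\nu := t/s\in[0,1)$ are the required constants.

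The hard part will be the phase bookkeeping in Step~2: since the product $st$ is invariant under diagonal conjugation alone, the global $e^{\iu\phi}$ rescaling is essential for first aligning $st$ with $\R_{\geq 0}$ before a diagonal conjugation can separately align $s$. Everything else reduces to elementary commutator arithmetic and the standard Toeplitz--Hausdorff fact; the $\SU(2)$ refinement of $\tilde U$ is just a phase adjustment that leaves all conjugations invariant.
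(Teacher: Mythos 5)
Your proposal is correct: the trace-shift computation, the sign and normalization of $\tilde H$, the anti-diagonalization of the traceless part, the phase bookkeeping, and the scaling identity $\Gamma_{\alpha X}=|\alpha|^2\Gamma_X$ all check out, and non-normality correctly yields $|s|\neq|t|$, hence $\gamma=s^2>0$ and $\nu=t/s\in[0,1)$. The overall decomposition is the same as the paper's --- split off $\tr(V)/2$ into a Hamiltonian term, then bring the traceless remainder into the canonical anti-diagonal form by conjugation, phase alignment and rescaling --- but you establish both ingredients differently. For the first step the paper simply cites the ``freedom of representation'' of the Lindblad equation (\cite[Lem.~C.3]{LindbladReduced23}), whereas you rederive the identity $\Gamma_V-\Gamma_{V-cI}=\iu\ad_{\tilde H}$ by direct expansion; this makes the argument self-contained at the cost of a few lines of commutator algebra. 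For the second step the paper takes $\tilde U$ to diagonalize $[V,V^*]$ and asserts that the shifted matrix is then zero on the diagonal, whereas you locate the anti-diagonalizing basis via the numerical range (Toeplitz--Hausdorff) and the tracelessness of $V'$; your route avoids having to verify the paper's ``easy to show'' claim about the commutator basis, and it cleanly separates the existence of the anti-diagonal form from the non-normality hypothesis, which you only invoke to get $|s|\neq|t|$. Both arguments are valid; yours is more elementary and self-contained, the paper's is shorter because it leans on an external lemma and on the commutator diagonalization.
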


\begin{proof}
Let $\tilde U$ be a unitary such that $\tilde U[V,V^*]\tilde U^*$ is diagonal. 
Then it is easy to show that $\tilde U(V-\tr(V)/2)\tilde U^*$ is zero on the diagonal.
By adjusting $\tilde U$ (without renaming) we can additionally make sure that the off-diagonal elements have the same argument and the top right element has the greater modulus.
Together this gives $V-\tr(V)/2=\sqrt\gamma e^{\iu\phi}\tilde U^*\tilde V\tilde U^*$.
The freedom of representation of the Lindblad equation~\cite[Lem.~C.3]{LindbladReduced23} then implies
$
\Gamma_V 
= 
\iu\ad_{\tilde H} + \Gamma_{V-\tr(V)/2}
=
\iu\ad_{\tilde H} + \gamma\Gamma_{\tilde U^*\tilde V\tilde U}
$
as desired.
\end{proof}

There are two extremal cases. 
If $\nu=0$ we obtain a special case of the Bloch equations, and if $\nu=1$, the matrix $V$ is normal and hence the system is not coolable.
For this reason we exclude the case $\nu=1$.

\subsection{Space of generators}

From now on we assume that we have a single Lindblad term of the form $\tilde V$ as in Lemma~\ref{lemma:reduced-parameters} depending only on the parameter $\nu\in[0,1)$. The general solution will then be recovered at the end. All figures in this section use the value $\nu=1/2$.

In the qubit case the generators $-L_U$ are defined by two non-negative real numbers on the off diagonal and hence they can be easily visualized.
Indeed, the main tool in the following will be the \emph{space of generators} $\mf Q$ which is linearly isomorphic to the set of all $-L_U$:
$$
\mf Q=
\{(J_{12}(U)-J_{21}(U),J_{12}(U)+J_{21}(U)) : U\in\SU(2)\}. 
$$
For rank one systems ${\mf Q}$ takes on a rather simple form, see Figure~\ref{fig:rank-one-gens}.

\begin{figure}[h]
\centering
\includegraphics[width=0.40\textwidth]{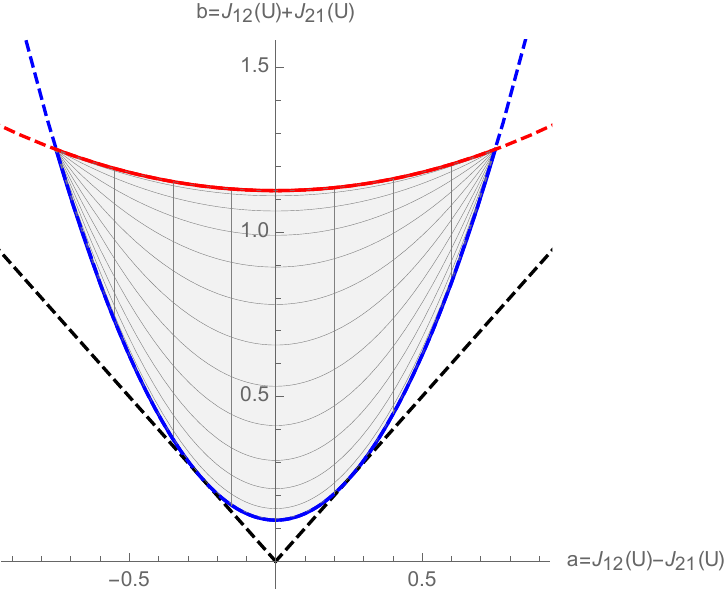}
\caption{The parametrized space of generators $\mf Q$ of a rank one system as given in Lemma~\ref{lemma:rank-one-space-of-lines} and Corollary~\ref{coro:rank-one-space-of-lines}. 
We work in a basis where $[V,V^*]$ is diagonal. 
The poles are mapped to the corners $(\pm(1-\nu^2),1+\nu^2)$. 
The latitude lines are vertical, and the equator lies on the $y$-axis. The longitude lines are parabolas passing through the poles and intersecting the $y$-axis between $b=\frac12(1\pm\nu)^2$.}
\label{fig:rank-one-gens}
\end{figure}

\begin{lemma} \label{lemma:rank-one-space-of-lines}
The space of generators can be parametrized as 
$
{\mf Q} = \{(a,f_z(a)) : a\in[-(1-\nu^2),1-\nu^2], \,z\in[0,1) \},
$ 
where
\begin{align*} 
f_z(a) = 1+\nu^2 - 
\frac{1+\nu^2-2\cos(4\pi z)\nu}{2}
\Big(1-\Big(\frac{a}{1-\nu^2}\Big)^2\Big), 
\end{align*}
In fact, the point $(a,f_z(a))$ can be obtained using the unitary $U=\me^{\iu\pi z\sigma_z}\me^{\iu\pi x\sigma_x}$ where $x$ satisfies $1-a/(1-\nu^2)=2\sin(\pi x)^2$ and $z=1/4$.
\end{lemma}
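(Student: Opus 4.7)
The plan is to exploit a gauge symmetry that reduces $\SU(2)$ to a two-parameter family and then compute $J(U)$ directly in that parametrization. The key preliminary observation is that $J(U)$ is invariant under $U\mapsto U\me^{\iu\theta\sigma_z}$ for any $\theta\in\R$, because $\ket{i}$ and $\ket{j}$ are $\sigma_z$-eigenvectors, so the substitution only multiplies $\langle i|U^*\tilde VU|j\rangle$ by a phase that disappears under the modulus squared. Combined with the Euler-type factorisation $\SU(2)=\me^{\iu\alpha\sigma_z}\me^{\iu\beta\sigma_x}\me^{\iu\gamma\sigma_z}$, this gauge freedom allows me to assume $U=\me^{\iu\pi z\sigma_z}\me^{\iu\pi x\sigma_x}$ without loss of generality, so that the parametrization is surjective onto the set of all $-L_U$.

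Next I would compute $U^*\tilde V U$ for this $U$. Writing $c=\cos(\pi x)$, $s=\sin(\pi x)$ and $q=\me^{\iu 2\pi z}$, a direct multiplication shows the off-diagonal entries are $\bar q c^2+\nu q s^2$ and $\bar q s^2+\nu q c^2$. Taking squared moduli and using $|\bar q\alpha+q\beta|^2=\alpha^2+\beta^2+2\alpha\beta\cos(4\pi z)$ for real $\alpha,\beta$ yields
\begin{align*}
J_{12}(U)&=c^4+\nu^2 s^4+2\nu c^2 s^2\cos(4\pi z),\\
J_{21}(U)&=s^4+\nu^2 c^4+2\nu c^2 s^2\cos(4\pi z).
\end{align*}

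The third step is to read off the two coordinates of the generator. The difference collapses nicely:
\[
a:=J_{12}-J_{21}=(1-\nu^2)(c^2-s^2)=(1-\nu^2)\cos(2\pi x),
\]
which is exactly the asserted relation $1-a/(1-\nu^2)=2\sin^2(\pi x)$. For the sum I would use $c^4+s^4=1-2c^2s^2$ and $4c^2s^2=1-\cos^2(2\pi x)$ to rewrite everything in terms of $a$, obtaining
\[
J_{12}+J_{21}=1+\nu^2-\tfrac12\bigl(1+\nu^2-2\nu\cos(4\pi z)\bigr)\Bigl(1-\bigl(\tfrac{a}{1-\nu^2}\bigr)^{\!2}\Bigr),
\]
which is precisely $f_z(a)$. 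The stated ranges are automatic: $x\in[0,1]$ sweeps $a$ over $[-(1-\nu^2),1-\nu^2]$, while $z\in[0,1)$ already covers every value of $\cos(4\pi z)\in[-1,1]$.

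The only real obstacle is computational bookkeeping — carefully tracking the phase factors in $U^*\tilde VU$ and applying the right trigonometric identities — but no conceptual difficulty arises once the $\sigma_z$ gauge freedom has been used to trim $\SU(2)$ down to the two parameters that actually carry information about $J(U)$.
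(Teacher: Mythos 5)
Your proposal is correct and follows essentially the same route as the paper: compute $J_{12}(U)$ and $J_{21}(U)$ explicitly for $U=\me^{\iu\pi z\sigma_z}\me^{\iu\pi x\sigma_x}$ and simplify the difference and sum trigonometrically to obtain $a=(1-\nu^2)\cos(2\pi x)$ and $b=f_z(a)$. Your explicit justification of surjectivity (right $\sigma_z$-gauge invariance of $J$ plus the $z$--$x$--$z$ Euler decomposition) is a small addition that the paper leaves implicit, but the substance of the argument is the same.
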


\begin{proof}
With $U$ as above we find $(a,b)\in{\mf Q}$ with
$a=(1-2r)(1-\nu^2)$, and
$b=1+\nu^2+2r(r-1)|1-e^{4\iu\pi z}\nu|^2$,
where $r=\sin(\pi x)^2$.
Hence with some basic trigonometry
we get 
$b = \Sigma - \tfrac12|1-e^{4\iu\pi z}\nu|^2
\big(1-\tfrac{a^2}{(1-\nu^2)^2}\big)$.
\end{proof}

It is of particular importance to understand the boundary of this set. 

\begin{corollary} \label{coro:rank-one-space-of-lines}
For $V$ non-normal, the space of generators is the region enclosed between the two parabolic segments 
\begin{align*}
a\mapsto 1+\nu^2 - 
\tfrac12(1\pm\nu)^2
\big(1-\big(\tfrac{a}{1-\nu^2}\big)^2\big),
\end{align*}
on $a\in[-1+\nu^2,1-\nu^2]$.
\end{corollary}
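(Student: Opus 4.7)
The plan is to read off the corollary directly from the parametrization in Lemma~\ref{lemma:rank-one-space-of-lines}. Fixing $a \in [-(1-\nu^2), 1-\nu^2]$, the value $f_z(a)$ depends on $z$ only through the factor $\cos(4\pi z)$, and linearly so: writing $c = \cos(4\pi z)$,
$$
f_z(a) = 1+\nu^2 - \tfrac12\bigl(1+\nu^2 - 2c\nu\bigr)\Bigl(1-\bigl(\tfrac{a}{1-\nu^2}\bigr)^{\!2}\Bigr).
$$
Since $1 - (a/(1-\nu^2))^2 \geq 0$ on the allowed range of $a$, the map $c \mapsto f_z(a)$ is affine and nondecreasing in $c$.

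Next I would determine the range of $c$ and hence of $f_z(a)$ at fixed $a$. As $z$ runs over $[0,1)$, $\cos(4\pi z)$ attains every value in $[-1,1]$ continuously, so by the intermediate value theorem $f_z(a)$ fills out the full closed interval with endpoints obtained at $c = \pm 1$. Substituting $c = 1$ gives the coefficient $(1-\nu)^2/2$, and $c = -1$ gives $(1+\nu)^2/2$. Because a larger coefficient subtracts more, the upper boundary of the fibre corresponds to $c = 1$ and the lower to $c = -1$, yielding exactly the two parabolic expressions in the statement.

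Finally I would assemble these fibres: $\mf Q$ is the union, over $a \in [-(1-\nu^2), 1-\nu^2]$, of the vertical segments just described, which is precisely the closed region bounded above and below by the two parabolic arcs. Both arcs agree at $a = \pm(1-\nu^2)$ (both equal $1+\nu^2$, the corner points of the figure), and separate at $a=0$ to the values $b = \tfrac12(1\mp\nu)^2$ noted in the figure caption, so the two segments indeed enclose a nondegenerate region whenever $\nu \in [0,1)$ and $V$ is non-normal.

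I do not expect any serious obstacle: the only thing to verify carefully is the monotonicity in $c$ (which uses $1 - (a/(1-\nu^2))^2 \ge 0$) and the elementary identity $1+\nu^2 \mp 2\nu = (1\mp\nu)^2$ used to rewrite the extremal coefficients. The connectedness of the $z$-parametrization rules out any gaps in the fibre, so no additional convexity or topological argument is needed.
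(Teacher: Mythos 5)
Your proposal is correct and matches the paper's (implicit) argument: the corollary is read off directly from the parametrization of Lemma~\ref{lemma:rank-one-space-of-lines}, with the extreme values $\cos(4\pi z)=\pm1$ giving the bounding parabolas $1+\nu^2-\tfrac12(1\mp\nu)^2\bigl(1-\bigl(\tfrac{a}{1-\nu^2}\bigr)^2\bigr)$ and intermediate values filling each vertical fibre by continuity. One tiny slip: for $\nu=0$ the two parabolas coincide (as the paper's remark notes), so the enclosed region is degenerate there, but this does not affect the corollary's statement.
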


\begin{remark}
The two extremal parabolas of Corollary~\ref{coro:rank-one-space-of-lines} (considered on $\R$) are the unique parabolas passing through the points $(\pm(1-\nu^2),1+\nu^2)$ which are tangent to the lines $a\mapsto\pm a$. 
The points of tangency in ${\mf Q}$ are achieved when $U^*\tilde VU$ is upper or lower triangular.
Note that for $\nu=0$ the two parabolas coincide (which is consistent with the Bloch case), and as $\nu\to1$ all points tend to the $y$-axis which is consistent with unital systems, cf.~\cite{QubitReduced24}.
\end{remark}

\subsection{Optimal derivatives and path}

In the qubit case it is convenient to represent the reduced state by the first eigenvalue $\lambda\in[0,1]$.
The maximal achievable derivative of $\lambda$, denoted by $\mu:[0,1]\to\R$, can be obtained from the boundary of ${\mf Q}$ via the relation
$$
\mu(\lambda)=\max_{(a,b)\in{\mf Q}}\tfrac{1}{2} (a+(1-2\lambda)b).
$$
Details are given in~\cite{QubitReduced24}.
A computation then yields the following result, see also Figure~\ref{fig:rank-one-mu} for an illustration.

\begin{lemma} \label{lemma:mu}
Let $V$ be non-normal and $\lambda_0=\tfrac12(1+\tfrac{1-\nu}{1+\nu})$, then the maximal derivative $\mu:[0,1]\to\R$ takes the form
\begin{align*} 
\mu(\lambda) =
\begin{cases}
\tfrac12(1-\nu^2 - (1+\nu^2)(2\lambda-1))
&\text{ if } 0\leq\lambda\leq\lambda_0\\[1mm]
(\tfrac{1-\nu}{2})^2 (\frac{1}{2\lambda-1}+2\lambda-1)
&\text{ if } \lambda_0\leq\lambda\leq1.
\end{cases}
\end{align*}
The function $\mu$ is continuously differentiable.
\end{lemma}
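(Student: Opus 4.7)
The plan is to evaluate the pointwise maximum
$$\mu(\lambda)=\max_{(a,b)\in\mf Q}\tfrac{1}{2}\bigl(a+(1-2\lambda)b\bigr)$$
directly, using the explicit description of $\mf Q$ from Corollary~\ref{coro:rank-one-space-of-lines}. Rewriting the two bounding parabolas in vertex form (and using $1-\nu^2=(1-\nu)(1+\nu)$), one gets $b_\pm(a)=\tfrac12(1\pm\nu)^2+a^2/(2(1\pm\nu)^2)$, both strictly convex on $[-(1-\nu^2),1-\nu^2]$ and meeting only at the two corners $(\pm(1-\nu^2),1+\nu^2)$. Since the objective is linear in $(a,b)$, the maximum must be attained on the upper arc $b_+$ or the lower arc $b_-$ according to the sign of the coefficient $(1-2\lambda)$.

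First I would handle the regime $\lambda\leq 1/2$. Here $(1-2\lambda)\geq 0$, so the maximum is achieved on the upper arc. Substituting $b_+(a)$ yields a strictly convex quadratic in $a$, hence the maximum over $[-(1-\nu^2),1-\nu^2]$ is attained at an endpoint, and a direct comparison singles out $a=1-\nu^2$. Evaluating gives precisely the linear expression in the statement.

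Next I would turn to $\lambda>1/2$, where the coefficient of $b$ is negative and the relevant arc is the lower one. The objective restricted to $b_-(a)$ is strictly concave in $a$, with interior critical point $a^*(\lambda)=(1-\nu)^2/(2\lambda-1)$. A short calculation shows that $a^*(\lambda)\leq 1-\nu^2$ if and only if $\lambda\geq\lambda_0$. For $\lambda\in(1/2,\lambda_0)$ the critical point lies outside the feasible interval, so the maximum on the lower arc is again attained at the corner $a=1-\nu^2$; crucially both arcs share that corner, so this reproduces the same linear formula as in the first regime, consistently extending it to the whole interval $[0,\lambda_0]$. For $\lambda\geq\lambda_0$ the maximizer $a^*(\lambda)$ is feasible, and substituting back into the objective yields the claimed rational expression.

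To conclude, continuity at $\lambda_0$ is automatic since both branches coincide with the value of the objective at the corner $(1-\nu^2,1+\nu^2)$. For $C^1$-smoothness I would differentiate both branches at $\lambda_0$ and verify they match; this is essentially the envelope property, since the transition happens exactly when the interior optimum $a^*$ first reaches the boundary of the feasible interval, so the derivative picks up no endpoint correction. The main obstacle is the careful case analysis around $\lambda=1/2$ and $\lambda=\lambda_0$: one must check that the relevant arc genuinely changes at $\lambda=1/2$ while the formula does not, and that the two distinct optimizers agree at the transition $\lambda=\lambda_0$ --- both being consequences of the fact that the two parabolic arcs share their endpoints and that the corner $(1-\nu^2,1+\nu^2)$ simultaneously maximizes $a$.
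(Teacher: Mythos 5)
Your strategy is exactly the computation the paper leaves implicit (``A computation then yields the following result''): maximize the linear functional $\tfrac12(a+(1-2\lambda)b)$ over the region of Corollary~\ref{coro:rank-one-space-of-lines}, split according to the sign of $1-2\lambda$, and locate the optimum either at the shared corner $(1-\nu^2,1+\nu^2)$ or at the interior critical point of the lower arc. Your vertex forms $b_\pm(a)=\tfrac12(1\pm\nu)^2+a^2/(2(1\pm\nu)^2)$ are correct, the threshold $a^*(\lambda)=(1-\nu)^2/(2\lambda-1)\leq 1-\nu^2 \iff \lambda\geq\lambda_0$ is correct, and the observation that the corner optimum extends the linear branch from $[0,\tfrac12]$ to all of $[0,\lambda_0]$ is the right way to handle the crossover at $\lambda=\tfrac12$.

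However, you assert that substituting $a^*$ back ``yields the claimed rational expression,'' and it does not: with $s=2\lambda-1$ one gets
\[
\tfrac12\Bigl(a^*-s\,b_-(a^*)\Bigr)=\Bigl(\tfrac{1-\nu}{2}\Bigr)^2\Bigl(\tfrac{1}{2\lambda-1}-(2\lambda-1)\Bigr),
\]
i.e.\ a \emph{minus} sign where the printed lemma has a plus. The printed formula cannot be right: it gives $\mu(1)=(1-\nu)^2/2>0$ although $\lambda\leq1$ forces $\mu(1)=0$, it is not even continuous at $\lambda_0$ (the two branches there evaluate to $\nu(1-\nu)/(1+\nu)$ and $(1-\nu)(1+\nu^2)/(2(1+\nu))$), and Lemma~\ref{lemma:opt-path} integrates precisely the minus-sign version (differentiating $\lambda^\star(t)=\tfrac12(1+\sqrt{1-c\me^{-(1-\nu)^2t}})$ gives $\tfrac{(1-\nu)^2}{4}(\tfrac1{2\lambda^\star-1}-(2\lambda^\star-1))$). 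With the corrected sign both branches and their derivatives agree at $\lambda_0$ (both derivatives equal $-(1+\nu^2)$), confirming the $C^1$ claim. So your method is sound and in fact exposes a sign typo in the statement; the gap in your write-up is only that you did not carry the last substitution through and therefore did not notice that your computation contradicts the formula you claim to reproduce.
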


\begin{figure}[h]
\centering
\includegraphics[width=0.40\textwidth]{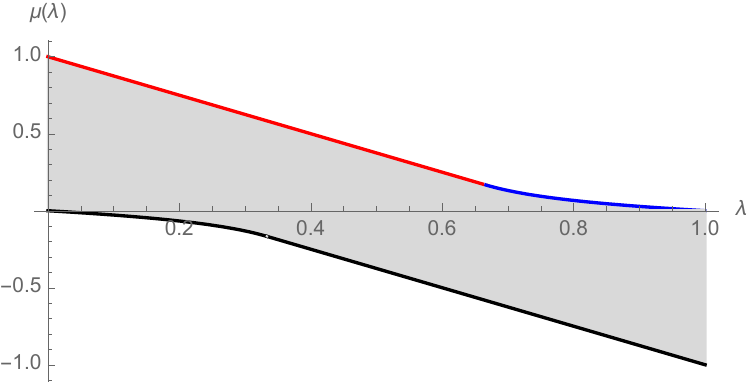}
\caption{Achievable derivatives as a set-valued function of $\lambda$ with the upper bound $\mu$, given in Lemma~\ref{lemma:mu}, highlighted.}
\label{fig:rank-one-mu}
\end{figure}

From this the optimal path through the Bloch ball can be computed.

\begin{lemma} \label{lemma:opt-path}
Let $t_0=\log\big(1-\frac{1+\nu^2}{1+\nu}\big)/(1+\nu^2)$.
The optimal path through the Bloch ball is given by $\rho^\star(t)=\Ad_{U^\star(t)}(\diag(\lambda^\star(t)))$ with $U^\star(t)=e^{\iu\pi y^\star(t)\sigma_y}$ and where
$$
\lambda^\star(t)=
\begin{cases}
\frac{1-e^{-(1+\nu^2)t}}{1+\nu^2} \quad&0\leq t\leq t_0\\[1mm]
\frac12\big( 1 + \sqrt{1-ce^{-(1-\nu)^2 t}} \big) \quad&t\geq t_0
\end{cases}
$$
with $c=\frac{4\nu}{(1+\nu^2)}\big(\frac{1+\nu}{\nu(1-\nu)}\big)^{\frac{(1-\nu)^2}{(1+\nu)^2}}$, is the unique solution to $\tfrac{d}{dt}\lambda^\star(t)=\mu(\lambda^\star(t))$ with 
$\lambda^\star(0)=0$ and the (continuous) function $y^\star$ is defined by
$$
y^\star(t)=\begin{cases}
0 \quad&0\leq t\leq t_0\\
\tfrac1\pi\arcsin\big(\sqrt{
\tfrac12(1+\frac{1-\nu}{1+\nu}\frac{1}{1-2\lambda^\star(t)})
}\big) \quad&t\geq t_0.
\end{cases}
$$
\end{lemma}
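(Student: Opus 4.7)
The plan is to split the lemma into three nearly independent tasks: (1) integrating the initial value problem $\dot\lambda^\star=\mu(\lambda^\star)$, $\lambda^\star(0)=0$ in each of the two regimes of $\mu$; (2) matching the branches at $\lambda_0$ to pin down $t_0$ and $c$; and (3) reconstructing the lifted unitary $U^\star(t)$ from the maximizer in the space of generators $\mf Q$. Optimality of $\rho^\star$ as a cooling trajectory is a standard comparison argument: every admissible path satisfies $\dot\lambda\le\mu(\lambda)$ pointwise, so a Gronwall-type comparison on the $C^1$ function $\mu$ shows that $\lambda^\star$ dominates every admissible $\lambda$, and hence reaches any target value first.

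Since $\mu$ is $C^1$ by Lemma~\ref{lemma:mu}, Picard--Lindel\"of yields uniqueness. On $[0,\lambda_0]$ the rate reduces to the linear ODE $\dot\lambda=1-(1+\nu^2)\lambda$, which integrates directly to the first branch of the claimed formula, and solving $\lambda^\star(t_0)=\lambda_0=1/(1+\nu)$ produces $t_0$. On $[\lambda_0,1]$ I would use the substitution $\xi:=(2\lambda^\star-1)^2$; differentiating and inserting the expression for $\mu$ from Lemma~\ref{lemma:mu} collapses the nonlinear ODE to the linear one $\dot\xi=(1-\nu)^2(1-\xi)$, whose solution, once inverted, gives $\lambda^\star(t)=\tfrac12(1+\sqrt{1-ce^{-(1-\nu)^2 t}})$. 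The constant $c$ is then fixed by the continuity condition $\xi(t_0)=((1-\nu)/(1+\nu))^2$, and substituting the already computed value of $t_0$ yields the displayed closed form.

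For the optimal unitary I would treat $\mu(\lambda)=\max_{(a,b)\in\mf Q}\tfrac12(a+(1-2\lambda)b)$ as a linear program over the convex region characterized in Corollary~\ref{coro:rank-one-space-of-lines}. For $\lambda\in[0,\lambda_0]$ the objective is maximized at the pole $(1-\nu^2,1+\nu^2)$, which Lemma~\ref{lemma:rank-one-space-of-lines} realizes by $U=I$, forcing $y^\star=0$. For $\lambda\in[\lambda_0,1]$ a Lagrange calculation on the lower parabolic boundary of $\mf Q$ yields the interior maximizer $a^\star=(1-\nu)^2/(2\lambda^\star-1)$, realized by a unitary of the form $e^{\iu\pi z\sigma_z}e^{\iu\pi x\sigma_x}$ with $z=1/4$ and $\sin^2(\pi x)=\tfrac12(1-a^\star/(1-\nu^2))$. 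Since $L_U$ depends on $U$ only through the entrywise moduli encoded in $J(U)$, and hence is invariant under right-multiplication of $U$ by any diagonal unitary, this unitary can be rewritten as $e^{\iu\pi y^\star\sigma_y}$ with $\sin^2(\pi y^\star)=\sin^2(\pi x)$, recovering the formula in the lemma.

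The main obstacle I anticipate is the bookkeeping at the junction $t=t_0$: simultaneously matching $\lambda^\star$, $y^\star$, and the correct branches of $\arcsin$ and the square root across the transition, while handling the algebra of $(1\pm\nu)$ factors carefully. The ODE integrations themselves are routine once the substitution $\xi=(2\lambda^\star-1)^2$ is identified, and the construction of $U^\star$ reduces to identifying the equivalence class of the optimal unitary modulo diagonal unitaries.
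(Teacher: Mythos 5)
Your proposal is correct and follows essentially the same route as the paper's one-line proof: integrate $\dot\lambda=\mu(\lambda)$ branch by branch, match at $\lambda_0=1/(1+\nu)$, and read off $y^\star$ from the maximizing boundary point of $\mf Q$; your substitution $\xi=(2\lambda^\star-1)^2$ and the linear-programming view of the maximizer are exactly the computations the paper leaves implicit, and your Picard--Lindel\"of/comparison remarks supply the uniqueness and optimality claims the paper takes for granted. Note only that your reduction to $\dot\xi=(1-\nu)^2(1-\xi)$ tacitly uses $\mu(\lambda)=(\tfrac{1-\nu}{2})^2\big(\tfrac{1}{2\lambda-1}-(2\lambda-1)\big)$ on the second branch (the sign consistent with the stated $\lambda^\star$ and with the later control formula, not the $+$ printed in Lemma~\ref{lemma:mu}), and that carrying out your matching literally yields $t_0=-\log\big(1-\tfrac{1+\nu^2}{1+\nu}\big)/(1+\nu^2)>0$ and $c=\tfrac{4\nu}{(1+\nu)^2}\big(\tfrac{1+\nu}{\nu(1-\nu)}\big)^{(1-\nu)^2/(1+\nu^2)}$, so the constants as printed in the lemma contain sign/transposition typos rather than your argument containing a gap.
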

\smallskip
\begin{proof}
The function $\lambda^\star$ is found by integration and can be checked by differentiating, and $y^\star$ as a function of $\lambda^\star$ is found by computing the values of $x$ and $z$ which give the boundary point of ${\mf Q}$ corresponding to $\lambda$.
\end{proof}

The optimal path is illustrated in Figure~\ref{fig:rank-one-path}. 
Note that the path is much simpler than one might expect from the formulas.

\begin{figure}[h]
\centering
\includegraphics[trim={0 1cm  0 0},clip,width=0.40\textwidth]{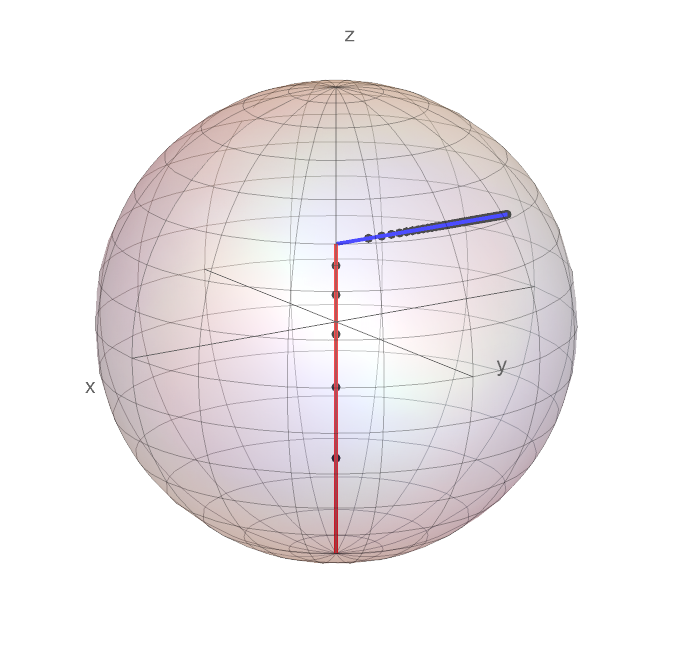}
\caption{Time-optimal path from the boundary of the Bloch ball (pure state) to the center (maximally mixed state) and back. Starting at the south pole, the path follows the $z$-axis until $\lambda=\lambda_0$. Then the path takes a sharp turn and continues horizontally until it reaches the boundary (which happens only asymptotically). 
When projected onto the $x,y$-plane, the horizontal part is a straight line lying on the negative $x$-axis (the mirrored path along the positive $x$-axis is also optimal).
The black dots are equally spaced in time, and accumulate towards the end.
The solution shares some similarity with the so-called magic plane result for the Bloch equations obtained in~\cite{Lapert10}.}
\label{fig:rank-one-path}
\end{figure}

\subsection{Optimal controls}

So far we have found the optimal derivatives of $\lambda$ and the optimal path of $\rho$ through the Bloch ball.
It remains to determine the corresponding optimal controls of the full control system~\eqref{eq:bilinear}.
To simplify the problem we assume that the control Hamiltonians are the Pauli matrices and the goal is to determine the corresponding control functions $u_x,u_y$ and $u_z$.

There are two contributions to the control Hamiltonians.
A \emph{direct} term obtained by differentiating the optimal control unitary of the reduced system and a \emph{compensating} term which cancels out the motion tangent to the unitary orbits induced by the drift $-L$, see~\cite[Prop.~3.10]{Reduced23}.
This leads to the following result:

\begin{proposition}
A choice of optimal controls is given by $u_x=u_z\equiv0$ on $[0,\infty)$. Moreover $u_y\equiv0$ on $[0,t_0]$ and
\begin{align*}
u_y(t) 
=&
\medmath{
\frac{\frac1{2\lambda-1} \frac{1-\nu}{1+\nu}}
{\sqrt{(2\lambda-1)^2-(\frac{1-\nu}{1+\nu})^2}} 
}
\cdot
\medmath{\Big(\frac{1-\nu}2\Big)^2 \Big(\frac1{2\lambda-1}-(2\lambda-1)\Big)}
\\ &+ 
\medmath{\frac{(1+\nu)\sin(2\pi y)}4
\Big(\frac{2(1-\nu)}{2\lambda-1} + (1+\nu)\cos(2\pi y)\Big)}\,,
\end{align*}
on $t\in[t_0,\infty)$ and where $\lambda^\star(t)$ is as in Lemma~\ref{lemma:opt-path}.
\end{proposition}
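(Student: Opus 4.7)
The plan is to invoke the lifting prescription of \cite[Prop.~3.10]{Reduced23}. Substituting the ansatz $\rho^\star(t) = U^\star(t)\diag(\lambda^\star(t))U^\star(t)^*$ from Lemma~\ref{lemma:opt-path} into~\eqref{eq:bilinear}, differentiating, and conjugating by $U^{\star*}$ yields
\[
-\iu\bigl[U^{\star*}H_c U^\star + \pi\dot y^\star\sigma_y,\,\diag(\lambda^\star)\bigr] = \diag(\dot\lambda^\star) + U^{\star*}L(\rho^\star)U^\star\,,
\]
with $H_c := u_x\sigma_x + u_y\sigma_y + u_z\sigma_z$. The diagonal part recovers $\dot\lambda^\star = \mu(\lambda^\star)$ (already established in Lemma~\ref{lemma:opt-path}), and the off-diagonal part determines $H_c$.

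On $[0,t_0]$ we have $U^\star\equiv I$ and $\rho^\star$ diagonal; since $\tr\tilde V = 0$ gives $\tilde H = 0$, and $\Gamma_{\tilde V}$ preserves diagonality on diagonal states by direct computation, the right-hand off-diagonal vanishes identically, so $u_x = u_y = u_z = 0$ works. On $[t_0,\infty)$ the key structural fact is that $\tilde V$ has real entries and $U^\star = \me^{\iu\pi y^\star\sigma_y}$ is a real orthogonal matrix, so $W := U^{\star*}\tilde V U^\star$ is real and hence $M := U^{\star*}L(\rho^\star)U^\star = \Gamma_W(\diag(\lambda^\star))$ is a real symmetric matrix. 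Reality of $M_{12}$, combined with the scalar equation $-\iu K_{12}(1-2\lambda^\star) = M_{12}$ where $K := U^{\star*}H_c U^\star + \pi\dot y^\star\sigma_y$, forces $K_{12}$ to be purely imaginary, i.e.\ $K$ has no $\sigma_x$-component; since $\sigma_y$ commutes with $U^\star$, this gives $u_x = 0$, and $u_z = 0$ is a consistent choice as $\sigma_z$ contributes only to the unconstrained diagonal of $K$.

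Solving the off-diagonal equation for the remaining $\sigma_y$-coefficient expresses $u_y$ as the sum of a direct term $\pi\dot y^\star$ and a compensating term proportional to $M_{12}/(2\lambda^\star-1)$. The direct term produces the first line of the stated formula via the chain rule $\dot y^\star = (dy^\star/d\lambda)\,\mu(\lambda^\star)$ together with the identities $\cos(2\pi y^\star) = (1-\nu)/((1+\nu)(2\lambda-1))$ and $\sin(2\pi y^\star) = \sqrt{(2\lambda-1)^2 - ((1-\nu)/(1+\nu))^2}/(2\lambda-1)$, extracted by squaring the defining formula for $y^\star$. The compensating term yields the second line after explicitly computing $M_{12} = \Gamma_W(\diag(\lambda^\star))_{12}$ and simplifying with the double-angle identities $\sin(2\pi y)=2\sin(\pi y)\cos(\pi y)$ and $\cos(2\pi y)=\cos^2(\pi y)-\sin^2(\pi y)$. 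The main technical step is the $\sigma_y$-alignment argument (reality of $M_{12}$), which rests on the real structure of $\tilde V$ and the one-parameter subgroup generated by $\sigma_y$; once established, the rest of the proof is trigonometric bookkeeping.
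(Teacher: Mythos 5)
Your route is the same as the paper's: there the control is likewise assembled from a \emph{direct} term (the derivative of the lifting unitary, i.e.\ $\pi\dot y^\star$ via the chain rule) plus a \emph{compensating} term $\ad_\rho^+(L(\rho))$ from \cite[Prop.~3.10]{Reduced23} -- which in the eigenbasis is exactly your $M_{12}/(2\lambda-1)$ -- and the rest is left as an ``elementary computation''. The extra structure you supply is sound in outline: the rotating-frame identity, and the reality argument ($W=U^{\star*}\tilde V U^\star$ real, hence $\Gamma_W(\diag(\lambda^\star))$ real symmetric) showing that $u_x=u_z=0$ is admissible. Two corrections, though: the diagonal consistency $\dot\lambda^\star=-\Gamma_W(\diag(\lambda^\star))_{11}$ is precisely the assertion that $U^\star$ realizes the boundary point of the space of generators achieving $\mu(\lambda^\star)$, which you should cite from the proof of Lemma~\ref{lemma:opt-path}; and conjugating $\sigma_z$ by $\me^{\iu\pi y\sigma_y}$ gives $\cos(2\pi y)\sigma_z+\sin(2\pi y)\sigma_x$, so it is \emph{not} true that $\sigma_z$ feeds only the diagonal of $K$ -- reality of $M_{12}$ constrains a rotated combination of $(u_x,u_z)$, not $u_x$ alone. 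Since the proposition only claims \emph{a} choice, setting $u_x=u_z=0$ still satisfies the constraint, but your ``this gives $u_x=0$'' overstates necessity.

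The genuine gap is that the decisive computation is announced rather than performed, and its asserted outcome does not check out as stated. Carrying it out with the $W$ above yields $M_{12}=\mp\tfrac{(1+\nu)\sin(2\pi y)}{4}\bigl(2(1-\nu)-(1+\nu)(2\lambda-1)\cos(2\pi y)\bigr)$, hence a compensating contribution $M_{12}/(2\lambda-1)=\mp\tfrac{(1+\nu)\sin(2\pi y)}{4}\bigl(\tfrac{2(1-\nu)}{2\lambda-1}-(1+\nu)\cos(2\pi y)\bigr)$, the overall sign depending on which of the two mirror-optimal lifts $\me^{\pm\iu\pi y^\star\sigma_y}$ (and the sign convention of the control term in~\eqref{eq:bilinear}) one uses. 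This agrees with the stated second line only up to the sign of the $(1+\nu)\cos(2\pi y)$ term: a direct numerical check of the master equation at $\nu=1/2$, $\lambda^\star=5/6$ gives $u_y\approx\pm0.289$, while the printed formula evaluates to $\approx 0.776$, so the discrepancy is real and must be reconciled (most plausibly a sign slip inside the bracket of the statement). The direct term, by contrast, does reproduce the first line exactly -- note that line contains $\tfrac{1}{2\lambda-1}-(2\lambda-1)$, i.e.\ the form of $\mu$ consistent with the ODE integrated in Lemma~\ref{lemma:opt-path} rather than the sign printed in Lemma~\ref{lemma:mu} -- so your identities for $\cos(2\pi y^\star)$ and $\sin(2\pi y^\star)$ and the chain-rule step are correct. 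In short: right method, essentially the paper's, but incomplete until you actually compute $M_{12}$ and sort out the sign bookkeeping instead of asserting that it ``yields the second line''.
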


\begin{proof}
The direct term is $\big(\frac{d}{dt}U_{y^\star(t)}\big)U^*_{x^\star(t)}$ where $U_x=e^{\iu\pi x\sigma_x}$ and $x^\star(t)$ is as in Lemma~\ref{lemma:opt-path}.
Using the chain rule this becomes $\iu\pi\sigma_x\frac{dx^\star}{d\lambda^\star}(\lambda^\star(t))\mu(\lambda^\star(t))$.
The expression follows directly from the evaluation of the derivative.
The compensation term takes the form 
$\ad_\rho^+(L(\rho))$, cf.~\cite[Prop.~3.10]{Reduced23},
where 
$\ad_\rho(\cdot)=[\rho,\cdot]$ and $(\cdot)^+$ is the Moore--Penrose inverse. 
The result then follows from an elementary computation.
\end{proof}

The obtained optimal controls are illustrated in Figure~\ref{fig:rank-one-ctrl}.

\begin{figure}[h]
\centering
\includegraphics[width=0.40\textwidth]{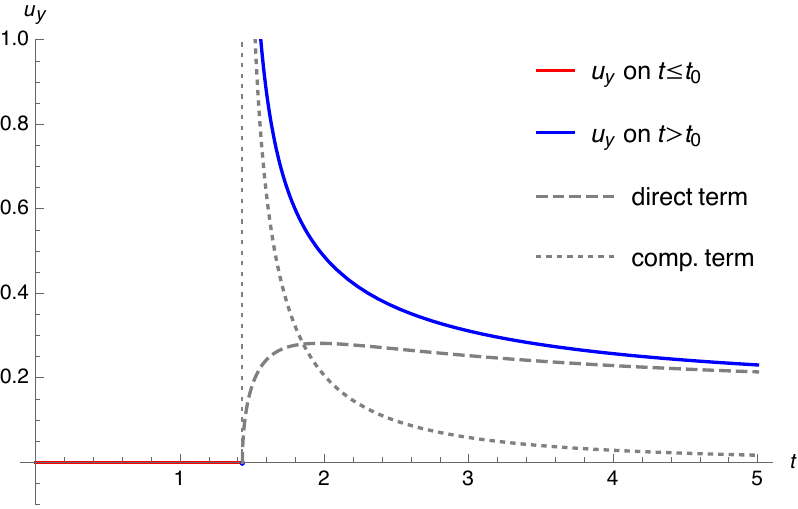}
\caption{Optimal control function $u_y$ (solid) with direct (dashed) and compensation (dotted) contributions using $\nu=1/2$. 
The control is identically zero on $[0,t_0]$ and has a singularity at $t_0$.}
\label{fig:rank-one-ctrl}
\end{figure}

\begin{remark}
For general non-normal $V$ and with initial state is $\ketbra0$ the optimal controls are the following, using the notation from Lemma~\ref{lemma:reduced-parameters}.
First (almost) instantaneously apply the unitary $\tilde U$ to the system.
Then apply the control Hamiltonian
$$
H(t)=-\tilde H 
+ \gamma u_{\nu,y}(\gamma t) \, \tilde U\sigma_y\tilde U^*
$$
for $t\in[0,\infty)$.
\end{remark}

\section{ACHIEVABLE DERIVATIVES} \label{sec:derivs}

Locally the reduced control system can be understood by studying the set of \emph{achievable derivatives} at $\lambda$, denoted $\derv(\lambda):=\{-L_U\lambda:U\in\SU(n)\}$.
Due to continuity of the map $U\mapsto -L_U\lambda$ it is clear that the set is compact and path-connected, but the exact shape is difficult to determine in general.

In the qubit case studied above, $\derv(\lambda)$ was just a closed interval, and we were able to give an analytical expression.
For more general qubit systems this task becomes more difficult, but it still allows for a partial analytical solution~\cite{QubitReduced24}.
In higher dimensional systems the shape of $\derv(\lambda)$ can be quite arbitrary, but in some special cases it takes the form of a (convex) polytope, which can be seen as a generalization of the qubit case.
In general however this is not true, although it is still useful to approximate $\derv(\lambda)$ with polytopes, both from the inside and the outside.

In the remainder of this section we provide some results and examples in this direction.
Note that due to the Relaxation Theorem~\cite[Ch.~2.4, Thm.~2]{Aubin84} we are also interested in the convex hull of $\derv(\lambda)$.

\subsection{Examples of polytopes}

We present a few cases where the set of achievable derivatives $\derv(\lambda)$ takes the form of a polytope, and some examples where $\derv(\lambda)$ is not convex and $\conv(\derv(\lambda))$ is not a polytope.

First, at the maximally mixed state $\e/n$, the set $\derv(\e/n)$ is always a convex polytope. In fact the vertices are the vectors containing the eigenvalues of $\sum_{k=1}^r[V_k,V_k^*]$ in all possible permutations. This can be shown using~\cite[Lem.~B.2]{LindbladReduced23} together with the Schur--Horn Theorem~\cite{Schur23,Horn54}.

Another special case occurs at the vertices $e_i$ of the simplex $\Delta^{n-1}$ under the assumption that there is only one Lindblad term $V$.
Indeed there exists a value $0<f^\star\leq \|V\|_\infty$, where $\|\cdot\|_\infty$ is the Schatten $\infty$-norm (i.e.\ the largest singular value), such that $$\derv(e_i)=f^\star\conv(\{0,e_j-e_i:j\neq i\}).$$
Let $f(U)$ be the sum of the squares of the off-diagonal elements in the first column of $J(U)$.
Then $f^\star$ is the maximal value of $f(U)$, and the value $0$ is achievable using the Schur decomposition.
Using unitaries which leave the first basis vector invariant the entire polytope can be obtained. 

It is important to note that generically $\derv(\lambda)$ is not a polytope.
For a simple counterexample consider the Lindblad term $V=\ketbrax12+\sqrt2\ketbrax23$ on a qutrit.
A numerical computation shows that for general $\lambda$ the set $\derv(\lambda)$ is not convex.

In the remainder of this section we introduce two concrete systems where $\derv(\lambda)$ is always a polytope, and they will serve as running examples in the following sections.

\subsubsection{Qutrit systems with spontaneous emission}

Since the qubit case is addressed in detail in~\cite{QubitReduced24}, the next logical step is the qutrit.
We consider the special form of system with only spontaneous emissions as described in~\cite{Sklarz04} and which include the $\Lambda$-system.
Such systems are defined by Lindblad terms of the form $\sqrt{\gamma_{ij}}\ketbrax{i}{j}$ for $i,j\in\{1,\ldots,n\}$.
These systems have the property that
$J(U)=\Theta^\top\Gamma\Theta$ where $\Gamma=J(\id)$ and $\Theta$ is the unistochastic matrix defined by $\Theta_{ij}=|U_{ij}|^2$.
We focus on the following two systems:
$$
\Gamma^{\Lambda}=\begin{pmatrix}
0&0&0\\
\gamma_1&0&0\\
\gamma_2&0&0
\end{pmatrix}
,\quad
\Gamma^{\mathrm V}=\begin{pmatrix}
0&\gamma_1&\gamma_2\\
0&0&0\\
0&0&0
\end{pmatrix}.
$$
The first one is the $\Lambda$-system studied in~\cite{Sklarz04} and the second one is an inverted version, which we call the $\mathrm V$-system.
We now show that for a $\mathrm V$-system $\derv(\lambda)$ is always a convex polytope.

\begin{proposition} \label{prop:V-sys}
For the $\mathrm V$-system in three dimensions it holds that
$\derv(\lambda) = \conv(\{-L_P\lambda:P\in S_3\})$,
where $S_3$ is the symmetric group represented by permutation matrices.
\end{proposition}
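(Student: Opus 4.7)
The plan is to exploit the rank-one factorization of $J(U)$ which is specific to the $\mathrm{V}$-system. Writing $\alpha$ for the first row of $\Theta=(|U_{ij}|^2)$ and $\beta_2,\beta_3$ for its remaining rows, the fact that $\Gamma^{\mathrm V}$ is supported on its first row gives $J(U) = \Theta^\top\Gamma^{\mathrm V}\Theta = \alpha\,\xi^\top$ with $\xi := \gamma_1\beta_2 + \gamma_2\beta_3$. Since $\sum_i\alpha_i = 1$, one has $J(U)^\top\e = \xi$, whence
\begin{equation*}
-L_U\lambda = \langle\xi,\lambda\rangle\,\alpha - \xi\odot\lambda,
\end{equation*}
which is manifestly bilinear in the pair $(\alpha,(\beta_2,\beta_3))$ viewed as a partition of the rows of $\Theta$.

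For the inclusion $\derv(\lambda)\subseteq\conv(\{-L_P\lambda:P\in S_3\})$ I would use the support-function characterization. For any linear functional $\ell$, the map $\Theta\mapsto\ell(-L_\Theta\lambda) = \langle\ell,\alpha\rangle\langle\xi,\lambda\rangle-\langle\ell\odot\lambda,\xi\rangle$ is bilinear in the row partition. Fixing $\alpha$ reduces this to an affine function on the slice of the Birkhoff polytope cut out by $\beta_2+\beta_3=\e-\alpha$, whose maximum is attained at a vertex; varying $\alpha$ and using the same linearity argument then reduces the joint maximizer to a vertex of the full Birkhoff polytope, that is, a permutation matrix by the Birkhoff--von Neumann theorem. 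Since every permutation is unistochastic, the supremum over $\SU(3)$ coincides with $\max_{P\in S_3}\ell(-L_P\lambda)$, and the containment follows.

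For the reverse inclusion, each $-L_P\lambda$ lies in $\derv(\lambda)$ trivially. The remaining work is to realize all convex combinations; I would stratify by first row, noting that at any vertex $\alpha=e_i$ of the $\alpha$-simplex the remaining two rows form a $2\times 2$ unitary block (where every doubly stochastic matrix is unistochastic), which yields the line segment joining the two permutation derivatives that share first row $e_i$ and thus three edges of the hexagon $\conv(\{-L_P\lambda\})$. Continuity of $U\mapsto -L_U\lambda$, the path-connectedness of $\SU(3)$, and the already established containment in the hexagon then fill in the remainder. The principal obstacle lies here: the bilinearity obstructs a direct Birkhoff argument for the reverse direction, and in dimension three the unistochastic matrices form a proper subset of the Birkhoff polytope, so one must verify carefully that the unistochasticity constraint does not excise any interior point of the hexagon from $\derv(\lambda)$.
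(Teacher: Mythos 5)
Your rank-one factorization $J(U)=\alpha\,\xi^\top$ and the resulting formula $-L_U\lambda=\langle\xi,\lambda\rangle\,\alpha-\xi\odot\lambda$ are correct, and they are essentially the same structural observation the paper's proof starts from (there $-L_U\lambda=\sum_i\alpha_i\,\Gamma_i\lambda$, with $\tilde\gamma=(\gamma_2,\gamma_1,0)\Theta$ playing the role of your $\xi$). The genuine gap is in your support-function step for the inclusion $\derv(\lambda)\subseteq\conv(\{-L_P\lambda\})$. You argue that a function which is affine in $\alpha$ for fixed $(\beta_2,\beta_3)$ and affine in $(\beta_2,\beta_3)$ for fixed $\alpha$ must attain its maximum over the Birkhoff polytope at a vertex. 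That inference is invalid: the Birkhoff polytope is not a product of the $\alpha$-simplex with a fixed $\beta$-polytope — the slice $\beta_2+\beta_3=\e-\alpha$ moves with $\alpha$ — and a biaffine function on such a coupled polytope need not peak at a vertex; for instance $f(x,y)=xy$ on $\{x,y\ge0,\ x+y\le1\}$ is biaffine but maximized at $(\tfrac12,\tfrac12)$. So ``fix $\alpha$, pass to a vertex of the slice, then vary $\alpha$'' does not deliver a permutation matrix, and the Birkhoff--von Neumann step is not available. The conclusion you are after (the maximum over all bistochastic $\Theta$ is attained at a permutation) is in fact true, but it requires a different argument. The paper obtains it by writing $-L_U\lambda=\sum_i\alpha_i\,\Gamma_i\lambda$ and showing that each $\Gamma_i\lambda$ separately lies in the target polytope, using only the majorization $\tilde\gamma=(\gamma_2,\gamma_1,0)\Theta\preceq(\gamma_2,\gamma_1,0)$ valid for every bistochastic $\Theta$: in the coordinates $(x,y)=(-\dot\lambda_2/b,-\dot\lambda_3/c)$ the relevant facet inequalities are $x,y\le\max(\gamma_1,\gamma_2)$ and $x+y\le\gamma_1+\gamma_2$, and these depend only on $\xi$, uniformly in $\alpha$, so the coupling between $\alpha$ and $\beta$ never has to be optimized jointly. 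Your proof becomes correct if you replace the vertex-reduction step by exactly this majorization bound on $\xi$ (or a direct verification of the hexagon's facet inequalities for arbitrary $\alpha\in\Delta^2$ and $\xi$ with $\xi\preceq(\gamma_1,\gamma_2,0)$).

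On the reverse inclusion, your observation that at $\alpha=e_i$ the residual $2\times2$ block realizes whole edges of the hexagon is fine, but the proposed completion via continuity of $U\mapsto-L_U\lambda$ and path-connectedness of $\SU(3)$ only shows that the image is a compact connected subset of the hexagon containing those edges; it cannot certify that a two-dimensional interior is filled, so the unistochasticity issue you flag remains open in your sketch. For calibration, the paper's own proof also only carries out the forward inclusion in detail (which is the direction used later, namely quasi-classicality), so the substantive defect of your attempt is the unjustified vertex argument above rather than the admitted incompleteness here.
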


\begin{proof}
We find that 
the generator $-L_U$
equals the convex combination of generators $u_1\Gamma_1+u_2\Gamma_2+u_3\Gamma_3$ where the $\Gamma_i$ are
$$
\begin{pmatrix}
0&\tilde\gamma_2&\tilde\gamma_3\\
0&-\tilde\gamma_2&0\\
0&0&-\tilde\gamma_3
\end{pmatrix},
\begin{pmatrix}
-\tilde\gamma_1&0&0\\
\tilde\gamma_1&0&\tilde\gamma_3\\
0&0&-\tilde\gamma_3
\end{pmatrix},\begin{pmatrix}
-\tilde\gamma_1&0&0\\
0&-\tilde\gamma_2&0\\
\tilde\gamma_1&\tilde\gamma_2&0
\end{pmatrix}
$$
and where 
$(\tilde\gamma_1,\tilde\gamma_2,\tilde\gamma_3)=(\gamma_2,\gamma_1,0)\Theta$.

It suffices to show that $\Gamma_1\lambda$ lies in the desired polytope.
Consider the following linear bijection: 
$$
(\dot\lambda_1,\dot\lambda_2,\dot\lambda_3)
\mapsto
(x,y):=(-\dot\lambda_2/b,-\dot\lambda_3/c).
$$
In these coordinates, the polytope intersected with the first quadrant is defined by
$$
x,y\leq\max(\gamma_1,\gamma_2),\quad 
x+y\leq \gamma_1+\gamma_2
$$
and clearly $\Gamma_1\lambda$ satisfies this since $(\tilde\gamma_1,\tilde\gamma_2,\tilde\gamma_3)\preceq(\gamma_2,\gamma_1,0)$ as $\Theta$ is bistochastic.\footnote{The symbol $\preceq$ denotes majorization, see Section~\ref{sec:majorization} for details.}
\end{proof}

\begin{figure}[h]
\centering
\includegraphics[width=0.40\textwidth]{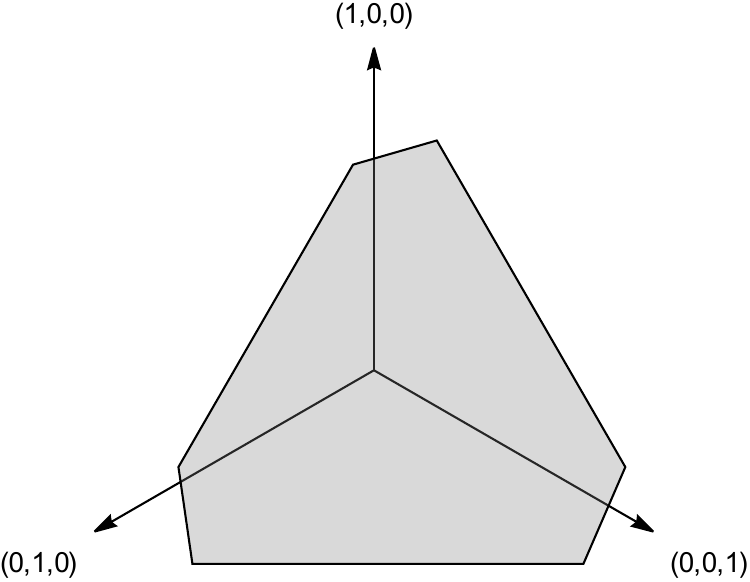}
\caption{Achievable derivatives in the $\mathrm V$-system with $\gamma_1=1$ and $\gamma_2=2$ at the point $\lambda=(0.4,0.35,0.25)$.}
\label{fig:inv-lambda-derivs}
\end{figure}

One might be tempted to try and generalize this result to all qutrit systems with spontaneous emission, but unfortunately it fails already for the $\Lambda$-system, see Figure~\ref{fig:lambda-derivs} and the following example.

\begin{example}
Consider the $\Lambda$-system with $\gamma_1=\gamma_2=1$ and let $\lambda=(1,0,0)^\top$.
Using only permutations we obtain the derivatives $(0,0,0)^\top$ and $(-2,1,1)^\top$.
However, a part of the boundary can be obtained by computing
$$
\begin{pmatrix}
x&y&0\\
y&x&0\\
0&0&1
\end{pmatrix}
\begin{pmatrix}
0&0&0\\
1&0&0\\
1&0&0
\end{pmatrix}
\begin{pmatrix}
x&y&0\\
y&x&0\\
0&0&0
\end{pmatrix} 
=
\begin{pmatrix}
*&y^2&0\\
xy&*&0\\
x&y&0
\end{pmatrix},
$$
where $x\in[0,1]$ and $y=1-x$. Clearly the resulting derivatives do not lie in the convex hull of $(0,0,0)^\top$ and $(-2,1,1)^\top$.
\end{example}

It is somewhat unexpected that the $\Lambda$-system does not have this polytope property, since the optimal cooling solution found in~\cite{Sklarz04} only requires permutations. 
This is explained by the fact that the ``non-classical'' achievable derivatives are suboptimal for cooling, cf.\ Sec.~\ref{sec:majorization}.

\begin{figure}[h]
\centering
\includegraphics[width=0.35\textwidth]{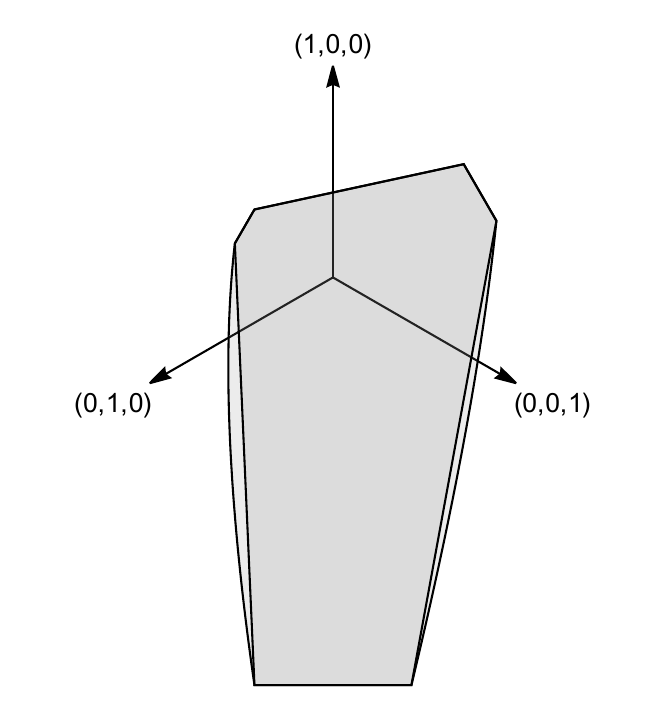}
\caption{Achievable derivatives in the $\Lambda$-system with $\gamma_1=1$ and $\gamma_2=2$ at the point $\lambda=(0.6,0.25,0.15)$. The corners are achieved by permutations, and there are bulges which leave the polytope but belong to $\derv(\lambda)$.}
\label{fig:lambda-derivs}
\end{figure}

\subsubsection{Spin-spin system}

In this section we explore a system composed of two qubits with a single Lindblad term $V=\sigma_-\otimes\id$ where $\sigma_-$ is the lowering operator.
This can be seen as a first approximation for the ubiquitous spin-boson system.
We will conjecture an exact description of the (convex hull) of the achievable derivatives and support it with a partial proof and numerical evidence.
In Sec.~\ref{sec:upper-bounds} we give a slightly larger upper bound with full proof and in Sec.~\ref{sec:opt-cooling} we derive an optimal cooling procedure for the system.

\begin{conjecture} \label{conj:spin-spin}
For every $\lambda\in\Delta^3$ it holds that $$\conv(\derv(\lambda))=\conv(\{-L_P\lambda:P\in S_4\})=:\mathcal P(\lambda).$$
\end{conjecture}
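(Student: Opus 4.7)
The direction $\supseteq$ is immediate: every permutation matrix $P$ lies in $\SU(4)$ up to a global phase, so $-L_P\lambda \in \derv(\lambda)$ and hence lies in $\conv(\derv(\lambda))$. The nontrivial content is the reverse inclusion, that every achievable derivative $-L_U\lambda$ belongs to $\mathcal{P}(\lambda)$.

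My plan is to first enumerate the twelve distinct permutation-derived derivatives. Each $-L_P\lambda$ implements mass transfers $\lambda_c \mapsto \lambda_a$ and $\lambda_d \mapsto \lambda_b$ for some pair-partition $\{a,c\},\{b,d\}$ of $\{1,2,3,4\}$; running over the three partitions and the four source/destination assignments per partition yields twelve vertices. From these I would extract a minimal facet description of $\mathcal{P}(\lambda)$ inside the three-dimensional trace-zero hyperplane, and use the induced $S_4$-action on the label set to organise the facets into a small number of symmetry orbits, reducing the proof to checking one representative inequality per orbit uniformly in $U$.

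The second step is to reparameterise $-L_U\lambda$ in a form where the $2+2$ tensor structure of $V = \sigma_- \otimes \id$ is manifest. Splitting $U \in \SU(4)$ into four $2\times 2$ blocks and writing out $M = U^*VU$ blockwise gives $J(U)_{ij} = |\langle a_i, b_j\rangle|^2$, where $a_i = (U_{1i},U_{2i})^\top$ and $b_j = (U_{3j},U_{4j})^\top$ are vectors in $\C^2$ subject to the unitarity relations $\langle a_i,a_k\rangle + \langle b_i,b_k\rangle = \delta_{ik}$. The permutation extreme points correspond precisely to the degenerate configurations in which, for each $i$, one of $a_i,b_i$ vanishes and the other is a standard basis vector in $\C^2$. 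A further CS decomposition $U = \diag(V_1,V_2)\bigl(\begin{smallmatrix}C & -S \\ S & C\end{smallmatrix}\bigr)\diag(W_1,W_2)$ along the $2+2$ split would absorb the intra-block unitaries and leave only the two cosine-sine angles as essential parameters, against which each facet inequality becomes a trigonometric inequality.

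The hard part is the quantum interference term $2\,\Re(\bar U_{1i} U_{2i} U_{3j} \bar U_{4j})$ that separates $|\langle a_i, b_j\rangle|^2$ from the ``classical'' contribution $|U_{1i}|^2 |U_{3j}|^2 + |U_{2i}|^2 |U_{4j}|^2$: these cross terms produce transition patterns unavailable to any permutation, and a pointwise bound on them is too crude to stay inside $\mathcal{P}(\lambda)$. A successful argument must exploit the collective cancellations $\sum_i \bar U_{1i} U_{2i} = 0$ and $\sum_j U_{3j}\bar U_{4j} = 0$ inherited from unitarity rather than bounding the interference entrywise. I expect the cleanest route to be an extremality argument identifying the maximisers of each facet functional on $\SU(4)$, via Lagrange multipliers modulo the block-diagonal degeneracy, as permutation unitaries; alternatively one could seek a quantum Birkhoff-type decomposition tailored to rank-two nilpotent Lindblad terms. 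Both avenues appear genuinely delicate, which is consistent with the author's description of the conjecture as supported only by a partial proof together with numerical evidence.
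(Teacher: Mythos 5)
There is a genuine gap here, and it is exactly the gap that makes the statement a \emph{conjecture} in the paper rather than a theorem: the reverse inclusion $\conv(\derv(\lambda))\subseteq\mathcal P(\lambda)$ is never established. Your first paragraph proves only the trivial direction $\supseteq$ (which the paper also dismisses in one line), and your enumeration of the twelve permutation vertices and the blockwise rewriting $J(U)_{ij}=|\langle a_i,b_j\rangle|^2$ with the unitarity constraints $\langle a_i,a_k\rangle+\langle b_i,b_k\rangle=\delta_{ik}$ are correct and consistent with the paper's setup (the count of $12$ vertices via the $(12)(34)$ symmetry matches the paper's lemma on the facet structure). But from that point on you only list candidate strategies --- CS decomposition of the $2+2$ split, Lagrange multipliers on each facet functional, a ``quantum Birkhoff'' decomposition --- without carrying any of them out. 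The decisive step, namely showing that every $-L_U\lambda$ satisfies the four triangular-facet inequalities $\dot a(b+d)-\dot b(c-b)-b(c+d)\leq0$ for all $U\in\SU(4)$, is precisely where the interference terms you identify must be controlled collectively, and you leave it open. So as a proof the proposal fails at the same point the paper does: the paper offers only numerical evidence for the triangular facets and, as a rigorous substitute, the strictly larger outer bound of Lemma~\ref{lemma:spin-spin-upper-bound} ($J_{ij}\geq0$, $(J+J^\top)\e=\e$, $J_{ii}\leq\tfrac14$), which your sketch does not recover either.

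To be fair, you are explicit that the hard step is unresolved, and your diagnosis of the obstruction (pointwise bounds on the cross terms $2\Re(\bar U_{1i}U_{2i}U_{3j}\bar U_{4j})$ are too crude; one must use the column-orthogonality cancellations) is sound and would be a reasonable starting point for an actual attack. But a plan is not a proof: nothing beyond the trivial inclusion is demonstrated, so the proposal cannot be accepted as a proof of the statement --- which, to reiterate, the paper itself does not claim to possess.
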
 \smallskip

The $\supseteq$ direction is trivially satisfied, so it remains to show that the right-hand side is an upper bound of $\derv(\lambda)$.
First we need to better understand the polytope $\mathcal P(\lambda)$.
Indeed we can derive a simple inequality description of the polytope:

\begin{lemma}
For regular $\lambda$, the polytope $\mathcal P(\lambda)$ has $12$ vertices and $8$ facets, $4$ of which are hexagonal and the other $4$ are triangular.
The hexagonal facets are described by $\dot\lambda_i\geq-\lambda_i$.
Let $a$ be any of the four eigenvalues and $b\geq c\geq d$ be the remaining ones.
Then 
$$
\dot a(b+d) - \dot b(c-b) - b(c+d) \leq 0
$$
describes the corresponding triangular facet.
\end{lemma}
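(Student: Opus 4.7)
The plan is to (i) parametrize the $12$ candidate vertices explicitly, (ii) verify each of the $8$ claimed facet inequalities vertex by vertex, and (iii) close the description with an Euler count.

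First, I would make the orbit explicit. Writing $V=\sigma_-\otimes\id=|1\rangle\langle 3|+|2\rangle\langle 4|$ in the two-qubit computational basis $\{|00\rangle,|01\rangle,|10\rangle,|11\rangle\}$, one observes that $P^*VP$ is the sum of two rank-one terms $|a\rangle\langle c|+|b\rangle\langle d|$ acting on disjoint basis vectors, so that $-\Gamma_{P^*VP}(\diag(\lambda))$ stays diagonal. A short computation yields the vertex formula
\[
\dot\lambda_a=+\lambda_c,\quad \dot\lambda_b=+\lambda_d,\quad \dot\lambda_c=-\lambda_c,\quad \dot\lambda_d=-\lambda_d,
\]
indexed by the decompositions of $\{1,2,3,4\}$ into two ordered source-to-sink pairs $(c\to a),(d\to b)$. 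Because $V$ is invariant under swapping its two rank-one summands, the $24$ permutations collapse to $12$ generators, which for regular $\lambda$ produce $12$ distinct derivative vectors in the $3$-dimensional trace-preserving hyperplane.

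Second, the hexagonal facets are immediate: at every vertex $\dot\lambda_i\in\{-\lambda_i\}\cup\{\lambda_k:k\neq i\}$, so $\dot\lambda_i\geq-\lambda_i$ holds everywhere and becomes an equality precisely when $i$ is a source. The $3$ choices of the other source and $2$ bijections from sources to sinks then pick out exactly the $6$ vertices on this facet, matching the claimed hexagonal shape.

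Third, I would verify each triangular facet. Rewriting the inequality as $\dot a(b+d)+\dot b(b-c)\leq b(c+d)$ makes both coefficients on the left nonnegative. Substituting each of the $12$ vertex formulas, the slack either vanishes or factors as a product of eigenvalue differences (for instance as $(d-c)(c+d)$ or $(c-b)(b+d)$), which is manifestly nonpositive under $b\geq c\geq d$. I expect equality to hold at exactly the $3$ vertices in which the index of $a$ is a sink and receives the larger of the two available sources---one such vertex per choice of co-sink $\neq a$---yielding the triangle. This systematic case analysis, uniform in the four choices of the ``distinguished'' index $a$, is the main technical obstacle I anticipate. Finally, to rule out further facets I would invoke Euler's formula: with $V=12$ vertices and the proposed $F=4+4=8$ facets, $E=(4\cdot3+4\cdot6)/2=18$ and $V-E+F=2$, as required of a $3$-polytope; the vertex-facet incidence (each vertex in one triangle and two hexagons) further identifies $\mathcal P(\lambda)$ combinatorially with a truncated tetrahedron, certifying that the facet list is complete.
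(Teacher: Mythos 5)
The paper itself offers no proof of this lemma beyond the one-line remark that the vertex count comes from the invariance of $V$ under the permutation $(12)(34)$, so your proposal is not in conflict with anything in the text; it is a direct verification, which is essentially the only natural route. Your key computations check out: with $V=\ketbrax{1}{3}+\ketbrax{2}{4}$, conjugation by permutations yields the $4!/2=12$ generators of the form $\ketbrax{a}{c}+\ketbrax{b}{d}$, each giving the diagonal-preserving derivative $\dot\lambda_a=\lambda_c$, $\dot\lambda_b=\lambda_d$, $\dot\lambda_c=-\lambda_c$, $\dot\lambda_d=-\lambda_d$; the hexagon inequalities are tight exactly at the sources; and a full case check of the triangle inequality $\dot a(b+d)+\dot b(b-c)\leq b(c+d)$ over the $12$ vertices gives slacks such as $-(a+b)(b+d)$, $-(a+b)(c+d)$, $(c-b)(b+d)$, $(d-b)(d+b)$, $(d-c)(d+c)$, all nonpositive, with equality precisely at the three vertices where the distinguished index is a sink fed by the larger of the two sources (these are exactly the three points you predict, and two of them reappear later as the optimal derivatives). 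One small inaccuracy: the slack does not always ``factor as a product of eigenvalue differences''; in several cases it is minus a product of sums or a sum of products, but nonpositivity is manifest in every case, so the conclusion stands.

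The one step you should tighten is the closing argument. Euler's relation $V-E+F=2$ is only a necessary condition and cannot by itself certify that the facet list is complete; a hypothetical missing facet would simply falsify your assumed edge count. What does close the argument is the incidence structure you mention, made explicit: check that each of the $8$ proposed supporting hyperplanes contains three affinely independent vertices (so each carries a genuine $2$-face), that every vertex lies on exactly two hexagons and one triangle with affinely independent normals (which also certifies that all $12$ points are extreme, a point you assert but do not verify), and that every edge of each listed face is shared with exactly one other listed face. Then the union of the eight faces is a compact polyhedral surface without boundary inside $\partial\mathcal P(\lambda)$, hence equals it, so no further facets exist; equivalently, one can show directly that the polyhedron cut out by the $8$ inequalities in the trace-zero hyperplane has all its vertices among the $12$ points. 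With that replacement the proof is complete and identifies $\mathcal P(\lambda)$ combinatorially as a truncated tetrahedron, as you say.
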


The number of vertices stems from the fact that $V$ is invariant under the permutation $(12)(34)$.
The hexagonal facet inequalities are clearly satisfied, and so it remains to show that all achievable derivatives also satisfy the inequalities triangular facet.
So far we only have numerical evidence for this claim.
Yet 
a slightly larger provable bound can be obtained by studying the set of matrices $J(U)$, cf.\ \mbox{Sec.~\ref{sec:upper-bounds}}.

\subsection{Reduction to toy model}

We say that $-L$ is \emph{quasi-classical} if for all $\lambda\in\Delta^{n-1}$ it holds that $\derv(\lambda)\subseteq\conv(\{-L_P\lambda:P\in S_n\})$.

Hence by Proposition~\ref{prop:V-sys} the V-system on the qutrit is quasi-classical, but the qubit system studied in Sec.~\ref{sec:qubit} is not.
Such quasi-classical systems are particularly nice, because the controls in the reduced control system can be restricted to the permutations (up to convexification).

Systems with this limited set of controls have been studied in some detail in the setting of quantum thermodynamics~\cite{CDC19,OSID23,vE_PhD_2020,MTNS2020_1} under the name ``toy model''.
Note however that these systems are not quasi-classical, meaning that ``non-classical'' controls might improve the results obtained therein.

On the other hand these systems have the convenient property that diagonal density matrices remain diagonal, and hence no compensating Hamiltonian is necessary.
We will call such systems \emph{diagonally invariant}.
Indeed this follows from the formula of the compensating Hamiltonian given in~\cite[Sec.~3.2]{LindbladReduced23}. 
More generally a system is diagonally invariant if for all Lindblad terms $V_k$ the associated directed graph (with arcs corresponding to non-zero matrix entries) are disjoint unions of directed paths and directed cycles.
In particular systems with spontaneous emission and the spin-spin system introduced in the previous section have the property that diagonal states remain diagonal.

\subsection{Upper bounds and speed limits} \label{sec:upper-bounds}

Working with the toy model, i.e.\ restricting the controls of the reduced control system to permutations, is a practical way of simplifying the system and allows for the computation of solutions which are not necessarily optimal.
In effect, this method approximates $\conv(\derv(\lambda))$ from the inside.
Conversely, this section is concerned with approximations from the outside.
This is useful as it yields speed limits and upper bounds to optimal solutions.

First we look at some general results before considering specific systems.
Useful bounds can be obtained on the level of the $J(U)$ matrices.
A somewhat trivial bound can be obtained by setting $\gamma=\sum_{k=1}^r\|V_k\|_2^2$. Then all matrices $J=J(U)$ satisfy the inequalities $\textstyle\sum_{i,j=1}^n J_{ij} \leq \gamma$ and $J_{ij}\geq0$.
Using~\cite[Lem.~B.2]{LindbladReduced23} we get the following stronger bound.

\begin{lemma} \label{lemma:majorization-bounds}
Let $J=J(U)$, then $J_{ij}\geq0$ as well as 
$J\e\preceq\spec(\textstyle\sum_{k=1}^r V_kV_k^*)$, and
$J^\top\e\preceq\spec(\textstyle\sum_{k=1}^r V_k^*V_k)$, and
\begin{align*}
(J+J^\top)\e&\preceq\spec(\textstyle\sum_{k=1}^r \{V_k,V_k^*\}) \\
(J-J^\top)\e&\preceq\spec(\textstyle\sum_{k=1}^r [V_k,V_k^*]),
\end{align*}
and this defines a polytope bound for the set of all $J(U)$.
\end{lemma}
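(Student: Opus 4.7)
The plan is to reduce each bound to a direct application of the Schur--Horn Theorem after identifying the relevant row and column sums of $J=J(U)$ with diagonal entries of a unitary conjugate of a fixed Hermitian matrix.

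First I would compute the row sums. By definition $J_{ij}=\sum_{k=1}^r|\langle i|U^*V_kU|j\rangle|^2$, so inserting the resolution of identity $\sum_j\ketbra{j}=\id$ gives
\begin{align*}
(J\e)_i &= \sum_{j=1}^n \sum_{k=1}^r \langle i|U^*V_kU|j\rangle\langle j|U^*V_k^*U|i\rangle \\
&= \Big\langle i\Big|U^*\Big(\textstyle\sum_{k=1}^r V_kV_k^*\Big)U\Big|i\Big\rangle,
\end{align*}
i.e.~the diagonal of $U^*(\sum_k V_kV_k^*)U$. Since $\sum_k V_kV_k^*$ is Hermitian, Schur--Horn immediately yields $J\e\preceq\spec(\sum_k V_kV_k^*)$. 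The column-sum bound follows analogously, using $\id=\sum_i\ketbra{i}$, which identifies $(J^\top\e)_j$ with the $j$th diagonal entry of $U^*(\sum_k V_k^*V_k)U$.

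Next, since $(J\pm J^\top)\e=J\e\pm J^\top\e$ and the Schur--Horn identification above is linear in the Hermitian matrix being conjugated, one obtains
\begin{align*}
\big((J+J^\top)\e\big)_i &= \Big\langle i\Big|U^*\Big(\textstyle\sum_k\{V_k,V_k^*\}\Big)U\Big|i\Big\rangle,\\
\big((J-J^\top)\e\big)_i &= \Big\langle i\Big|U^*\Big(\textstyle\sum_k[V_k,V_k^*]\Big)U\Big|i\Big\rangle,
\end{align*}
and both $\sum_k\{V_k,V_k^*\}$ and $\sum_k[V_k,V_k^*]$ are Hermitian, so another application of Schur--Horn gives the asserted majorization inequalities. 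Non-negativity of the entries is immediate from the definition.

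Finally, to see that the listed constraints define a polytope in the space of $n\times n$ real matrices, I would invoke Rado's characterization (equivalently, Birkhoff's theorem) that the set of vectors majorized by a fixed vector is the convex hull of its permutations and hence is a polytope described by finitely many linear inequalities. Combining these polyhedral constraints across rows, columns, and their sum/difference together with the non-negativity inequalities $J_{ij}\geq 0$ produces the claimed polytope bound. No step should present a real obstacle; the only thing to be careful about is the linearity used to derive the sum/difference bound, since majorization itself is \emph{not} linear, but here we apply Schur--Horn to the Hermitian matrices $\sum_k\{V_k,V_k^*\}$ and $\sum_k[V_k,V_k^*]$ directly, bypassing this concern.
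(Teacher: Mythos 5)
Your proof is correct and follows essentially the same route as the paper: the paper obtains the identification of the row and column sums of $J(U)$ with the diagonal entries of $U^*\big(\sum_k V_kV_k^*\big)U$ and $U^*\big(\sum_k V_k^*V_k\big)U$ by citing Lemma~B.2 of the reference and then applies the Schur--Horn Theorem, which is exactly the computation you carry out explicitly (including the Hermitian anticommutator/commutator cases and the polyhedrality of majorization constraints).
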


Note that by linearity these also define corresponding polytope bounds for each $\derv(\lambda)$.
An important property of these bounds is that they guarantee that the simplex $\Delta^{n-1}$ is preserved, and hence they do not lead to unphysical behavior.

For the spin-spin system we have, so far, only conjectured a polytope bound on $\derv(\lambda)$.
By studying the set of matrices $J(U)$, we can prove a slightly larger bound.

\begin{lemma} \label{lemma:spin-spin-upper-bound}
Consider the spin-spin system. For every matrix $J=J(U)$ it holds that
$$
J_{ij}\geq0, \quad (J+J^\top)\e=\e, \quad J_{ii}\leq\tfrac14
$$
for all $i,j\in\{1,\ldots,n\}$.
\end{lemma}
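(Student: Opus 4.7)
The first two claims are essentially bookkeeping and should be dispatched immediately. Non-negativity of $J_{ij}=|\langle i|U^*VU|j\rangle|^2$ is immediate. For the row/column sum identity, I compute
\[
\sum_j J_{ij} = \langle i|U^*VV^*U|i\rangle, \qquad \sum_j J_{ji} = \langle i|U^*V^*VU|i\rangle,
\]
so $(J+J^\top)\e$ has $i$-th component $\langle i|U^*\{V,V^*\}U|i\rangle$. For $V=\sigma_-\otimes\id$ one checks that $\{V,V^*\}=(\sigma_-\sigma_+ + \sigma_+\sigma_-)\otimes\id = \id$, hence $(J+J^\top)\e=\e$.

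The only real content is the bound $J_{ii}\leq \tfrac14$. The plan is to exploit the very rigid structure of $V$: with $V=\sigma_-\otimes\id$ I have the factorization $V = (\ketbra{0}\otimes\id)\,V\,(\ketbra{1}\otimes\id)$, together with the partition-of-unity $\ketbra{0}\otimes\id + \ketbra{1}\otimes\id = \id$ and the operator-norm identity $\|V\|_\infty = 1$ (singular values $0,0,1,1$). Setting $W:=U^*VU$ and
\[
P := U^*(\ketbra{1}\otimes\id)U, \qquad Q := U^*(\ketbra{0}\otimes\id)U,
\]
I have two orthogonal rank-two projectors with $P+Q=\id$, and the factorization above conjugates to $W = QWP$.

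This turns the diagonal entry into a bilinear form on the two complementary subspaces: $W_{ii} = \langle i|QWP|i\rangle = \langle Q i|W|P i\rangle$, because $P$ and $Q$ are self-adjoint. Cauchy--Schwarz together with $\|W\|_\infty = \|V\|_\infty = 1$ then gives
\[
|W_{ii}| \;\leq\; \|P|i\rangle\|\cdot\|W\|_\infty\cdot\|Q|i\rangle\| \;=\; \sqrt{p\,q},
\]
where $p := \langle i|P|i\rangle$ and $q := \langle i|Q|i\rangle$. Since $P+Q=\id$ we have $p+q=1$, so AM--GM yields $pq\leq\tfrac14$, and therefore $J_{ii}=|W_{ii}|^2\leq pq\leq\tfrac14$.

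No step is truly hard here; the mild obstacle is simply spotting the decomposition $V = (\ketbra{0}\otimes\id)V(\ketbra{1}\otimes\id)$, which is what turns the naive bound $|W_{ii}|\leq 1$ (from $\|V\|_\infty=1$ alone) into the correct factor-of-two improvement by forcing the source and target vectors in the Cauchy--Schwarz estimate to live in orthogonal subspaces. Everything else is a one-line consequence.
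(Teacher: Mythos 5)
Your proof is correct and follows essentially the same route as the paper: your Cauchy--Schwarz/AM--GM estimate with the complementary projectors $P,Q$ is exactly the paper's coordinate bound $J(U)_{ii}=|\bar u_{1i}u_{3i}+\bar u_{2i}u_{4i}|^2\le\tfrac14$ (the $i$-th column of $U$ splits its unit norm between the two two-dimensional subspaces), just phrased basis-free. The only cosmetic difference is that you verify $(J+J^\top)\e=\e$ directly from $\{V,V^*\}=\id$ instead of citing Lemma~\ref{lemma:majorization-bounds}.
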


\begin{proof}
The first two constraints follow immediately from Lemma~\ref{lemma:majorization-bounds} and its proof (cf.~\cite[Lem.~B.2]{LindbladReduced23}), and the last one follows from $J(U)_{ii}=|\bar u_{1i}u_{3i}+\bar u_{1i}u_{3i}|^2\leq\frac14$.
\end{proof}

Note that by strengthening the latter inequalities to $J_{ii}=0$ we get the exact description of $\conv(\{J(P):P\in S_4\})$.
While the polytope bound from Lemma~\ref{lemma:spin-spin-upper-bound} is always larger than the conjectured bound, numerical results indicate that for different values of $\lambda$ the difference in volume does not exceed 10\%.

\section{MAJORIZATION THEOREM} \label{sec:majorization}

In qubit case the reduced state space $[0,1]\cong\Delta^1$ is one-dimensional, and hence there is only one optimal derivative for cooling (resp.\ heating).
In higher dimensions this is not true anymore. 
In this section we show that the set of achievable derivatives $\derv(\lambda)$ can be reduced to a subset of optimal derivatives, which typically still consists of more than one element.

An important way of comparing two mixed quantum states, or rather their eigenvalues, is called majorization~\cite{Horodecki13,Bengtsson17}.
First one defines majorization on vectors~\cite{MarshallOlkin}.
Let $\lambda,\mu\in\Delta^{n-1}$ be given.
Then $\lambda$ is said to \emph{majorize} $\mu$, denoted $\lambda\succeq\mu$ if 
$$
\textstyle\sum_{i=1}^k \lambda^\down_i \geq \sum_{i=1}^k \mu^\down_i, \quad k=1,\ldots,n,
$$
where $\lambda^\down_i$ is the $i$-th largest element of $\lambda$ and analogously for $\mu$.
The notion carries over to quantum states by defining that a state majorizes another if its eigenvalues majorize those of the other state. 
A function $f$ is \emph{Schur-convex} if $\lambda\succeq\mu$ implies $f(\lambda)\succeq f(\mu)$, and it is \emph{Schur-concave} if its negative is Schur-convex.
Due to fast unitary control we consider cost functions which depend only on the eigenvalues of the state.
Indeed, all the functions given in Section~\ref{sec:cost} are Schur-convex or Schur-concave functions of the eigenvalues.
This follows from the fact that these functions are convex or concave and invariant under permutations.
The following result, which specializes~\cite[Thm.~5.3]{Reduced23}, shows that for the purpose of cooling, a state which majorizes another is always better.

\begin{theorem} \label{thm:majorization}
Let $\mu:[0,\infty)\to\Delta^{n-1}$ be a solution to the relaxed control system~\eqref{eq:relaxed} and let $\lambda_0\in\Delta^{n-1}$ such that $\mu(0)=\mu_0\preceq \lambda_0$. 
Then there exists a solution $\lambda:[0,\infty)\to\Delta^{n-1}$ to~\eqref{eq:relaxed} with $\lambda(0)=\lambda_0$ such that $\mu(t)\preceq \lambda(t)$ for all $t\in[0,\infty)$.
\end{theorem}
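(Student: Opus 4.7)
The plan is to cast this as a tracking problem for differential inclusions and solve it via a viability argument. Introduce the closed product set
\[
K \;=\; \{(\lambda,\mu)\in\Delta^{n-1}\times\Delta^{n-1} : \mu\preceq\lambda\},
\]
so that the theorem asserts that, for the prescribed solution $\mu(\cdot)$ of~\eqref{eq:relaxed}, there exists a solution $\lambda(\cdot)$ of~\eqref{eq:relaxed} starting at $\lambda_0$ with $(\lambda(t),\mu(t))\in K$ for all $t\geq 0$. Since the set-valued map $\lambda\mapsto\conv(\derv(\lambda))$ has compact convex values and is upper semicontinuous (its graph is the image of $\SU(n)\times\Delta^{n-1}$ under a continuous map, then convexified), a non-autonomous viability theorem of Haddad type (see~\cite{Aubin84}) reduces the existence of such a $\lambda(\cdot)$ to a purely infinitesimal condition: for every $(\lambda,\mu)\in K$ and every $v\in\conv(\derv(\mu))$, there must exist $w\in\conv(\derv(\lambda))$ such that the pair $(w,v)$ lies in the contingent (Bouligand) cone $T_K(\lambda,\mu)$.

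The heart of the proof is to verify this tangential compatibility. Exploiting the equivariance $\derv(P\lambda)=P\cdot\derv(\lambda)$ under permutations (which follows because permutation matrices sit inside the freely available unitary group), I would reduce without loss of generality to the sorted case $\lambda,\mu\in\Delta^{n-1}_\down$. Then majorization is simply $\sum_{i\leq k}\mu_i\leq\sum_{i\leq k}\lambda_i$ for $k=1,\ldots,n-1$, and at a point where these are tight on an active index set $A\subseteq\{1,\ldots,n-1\}$ the tangential condition demands $\sum_{i\leq k} w_i\geq\sum_{i\leq k} v_i$ for $k\in A$ (trace preservation is automatic since all $-L_U$ are trace-annihilating on $\R^n$). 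Building the candidate $w$ proceeds via Hardy--Littlewood--Polya and Birkhoff: write $\mu=D\lambda$ with $D=\sum_j\alpha_j P_j$ doubly stochastic, decompose $v=\sum_s\beta_s(-L_{U_s})\mu$, and set
\[
w \;:=\; \sum_{s,j}\alpha_j\beta_s\,(-L_{U_s P_j^\top})\lambda \;\in\;\conv(\derv(\lambda)),
\]
so that each summand is the analogue of its $v$-summand but applied to the appropriately rearranged copy of $\lambda$; the active partial-sum dominations then reduce to a Schur--Horn type statement for the matrices $J(\cdot)$ along the lines of~\cite[Lem.~B.2]{LindbladReduced23}.

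The main obstacle is the detailed verification of the tangent-cone condition at the boundary of $K$, where several majorization inequalities are simultaneously tight and, compounding the difficulty, where $\lambda$ or $\mu$ has repeated eigenvalues so that the sorting is non-unique and the tangent cone acquires a richer polyhedral structure. In such degenerate situations the construction of $w$ above must be refined, typically by perturbing to a nearby regular pair $(\lambda,\mu)$ with simple spectrum, running the argument there, and passing to the limit using upper semicontinuity of $\conv(\derv(\cdot))$ together with the Relaxation Theorem~\cite[Ch.~2.4, Thm.~2]{Aubin84}. Once this tangential compatibility is fully established, Haddad's viability theorem delivers the required absolutely continuous $\lambda(\cdot)$, recovering the statement as the announced specialization of~\cite[Thm.~5.3]{Reduced23}.
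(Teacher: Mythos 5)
The paper never proves Theorem~\ref{thm:majorization} in the text --- it is quoted as a specialization of \cite[Thm.~5.3]{Reduced23} --- so your argument has to stand on its own. Its skeleton (recast the tracking statement as viability of $K=\{(\lambda,\mu):\mu\preceq\lambda\}$ for the relaxed inclusion and verify a Haddad-type tangent-cone condition) is a legitimate route, but the two steps that carry all the mathematical weight are asserted rather than proven. The tangency verification \emph{is} the theorem, and your reduction of it to ``a Schur--Horn type statement for $J(\cdot)$'' is never formulated, let alone proved; moreover, as written the candidate $w=\sum_{s,j}\alpha_j\beta_s(-L_{U_sP_j^\top})\lambda$ does not line up with $v=\sum_{s,j}\alpha_j\beta_s(-L_{U_s})P_j\lambda$: since $-L_{UP}=P^\top(-L_U)P$, your $j$-th summand acts on $P_j^\top\lambda$ while the corresponding summand of $v$ acts on $P_j\lambda$, and the required domination of partial sums does not go through termwise. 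What actually makes a construction of this type work at an active index $k$ (with $\lambda$ sorted and $\lambda_k>\lambda_{k+1}$) is a combinatorial fact you never isolate: activity, $\sum_{i\le k}\mu_i=\sum_j\alpha_j\sum_{i\le k}(P_j\lambda)_i=\sum_{i\le k}\lambda_i$, forces every $P_j$ with $\alpha_j>0$ to map $\{1,\dots,k\}$ onto itself; choosing instead $w=\sum_{s,j}\alpha_j\beta_s(-L_{U_sP_j})\lambda=\sum_{s,j}\alpha_j\beta_s\,P_j^\top(-L_{U_s})P_j\lambda$ then gives \emph{equality} of the $k$-th partial sums of $w$ and $v$ at every active $k$ --- no Schur--Horn-type input about $J(U)$ is involved. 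Without this observation (or a substitute), the heart of the proof is missing.

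The degenerate case is also not a technicality you can outsource to a perturbation. A viability theorem needs the tangential condition at \emph{every} point of $K$ the pair $(\lambda(t),\mu(t))$ may visit, and the prescribed trajectory $\mu(\cdot)$ may have repeated entries for all $t$, so no perturbation of the initial data avoids these points; perturbing $\mu$ itself to a nearby curve with simple values takes you outside the class of solutions of~\eqref{eq:relaxed}, which destroys the setup producing the approximate trackers, and tangent-cone conditions are not preserved under the limits you invoke (upper semicontinuity of $\conv(\derv(\cdot))$ lets you pass limits of \emph{solutions}, not of tangency verifications). A complete argument must verify the tangency condition directly at points with ties, where the local description of majorization involves sorted partial sums and a richer active structure, and must also invoke the time-dependent (tube) version of the viability theorem carefully, since $\mu$ is only absolutely continuous. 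As it stands, the proposal is a plausible programme --- genuinely different in flavour from the paper's appeal to the general result of \cite{Reduced23} --- but with its central inequality and its boundary/degeneracy analysis left unproven.
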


Just like states are compared via majorization, two derivatives, at the same state, can be compared using an infinitesimal version of majorization, also called unordered majorization, cf.~\cite[Ex.~14.E.6]{MarshallOlkin}.
Let $v,w\in\R_0$ be two tangent vectors (derivatives) at $\lambda\in\Delta^{n-1}$.
Then we say that $v$ \emph{infinitesimally majorizes} $w$, denoted $v\tge w$ if
$$
\textstyle\sum_{i=1}^k v_i \geq \sum_{i=1}^k w_i, \quad k=1,\ldots,n.
$$
Theorem~\ref{thm:majorization} shows that derivatives which are not majorized by any other derivatives are \emph{optimal} for cooling.
If $\derv(\lambda)$ is a polytope, the subset of optimal derivatives takes a nice form.

\begin{corollary} \label{coro:opt-dervs}
Let $\lambda\in\Delta^{n-1}_\down$ be regular and assume that $\derv(\lambda)$ is a convex polytope.
If least one point in the relative interior of a face is optimal, then the entire face is optimal.
Moreover, the set of optimal faces is connected.
\end{corollary}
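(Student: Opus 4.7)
The plan is to translate optimality into polyhedral-geometric language, then handle the two claims separately. As a preliminary step I would introduce the pointed polyhedral ``improvement cone''
$$
C := \Bigl\{d \in \R^n : \textstyle\sum_{i=1}^n d_i = 0 \text{ and } \sum_{i=1}^k d_i \geq 0 \text{ for } k = 1, \ldots, n-1 \Bigr\}
$$
whose extreme rays are $e_i - e_{i+1}$. In this language $v \in \derv(\lambda)$ is optimal exactly when $(v + C \setminus \{0\}) \cap \derv(\lambda) = \emptyset$, i.e., $v$ is Pareto-maximal with respect to the partial order induced by $C$.

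For the first claim I would argue by contradiction. Suppose $v_0 \in \relint(F)$ is optimal but some $v' \in F$ is not; pick $w \in \derv(\lambda)$ with $d := w - v' \in C \setminus \{0\}$. Since $v_0$ lies in the relative interior of $F$, I can extend the segment from $v'$ through $v_0$ slightly further and obtain $u \in F$ with $v_0 = \alpha v' + (1-\alpha) u$ for some $\alpha \in (0,1)$. Convexity of $\derv(\lambda)$ then gives $\alpha w + (1-\alpha) u \in \derv(\lambda)$, and a direct calculation shows this point equals $v_0 + \alpha d$. Since $\alpha d \in C \setminus \{0\}$, this means $v_0$ is strictly dominated, contradicting its optimality. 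Hence every $v' \in F$ is optimal.

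For the second claim I would parametrize optimal faces via the dual cone $C^* = \{c : c_1 \geq \ldots \geq c_n\}$. A standard separating-hyperplane argument applied to the Minkowski sum $\derv(\lambda) - C$ at any optimal point shows that every optimal face $F$ satisfies $N_F(\derv(\lambda)) \cap C^* \neq \emptyset$. Since $\relint(C^*)$ is convex (hence path-connected) and the face of $\derv(\lambda)$ maximizing $c \cdot (\cdot)$ is locally constant on the open cells of the normal fan, tracing a path in $c$-space through $\relint(C^*)$ produces a sequence of optimal faces in which consecutive faces share a common sub-face; stitching these together yields path-connectedness of the union of optimal faces.

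The main obstacle will be the second claim---specifically, ensuring that every optimal face is attained by some $c \in \relint(C^*)$ rather than only by $c$ on $\partial C^*$. I would handle this either by a perturbation argument (approximating a boundary $c$ by interior ones while tracking face adjacencies in the normal fan) or by invoking the classical Naccache--Yu theorem that the efficient set of a polyhedron under a polyhedral partial order is connected, which would directly supply the desired connectedness.
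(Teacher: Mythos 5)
Your proposal is correct, but it reaches the conclusion by a genuinely different route than the paper. The paper's proof is a two-line polyhedral argument: with the same improvement cone $C$, it identifies the set of optimal derivatives with the union of the \emph{bounded} faces of the unbounded polyhedron $\derv(\lambda)-C$; this immediately gives both claims, since the bounded faces form a subcomplex (hence optimality of one relative-interior point propagates to the whole face) which is connected by a known result on bounded complexes of polyhedra (Joswig et al., Lem.~2.1). You instead split the two claims: your first argument (extending the segment from $v'$ through $v_0\in\relint(F)$ and taking a convex combination with the dominating point $w$) is complete, elementary, and in fact needs only convexity of $\derv(\lambda)$, not polytopality --- a mild gain in generality over the paper's phrasing. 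For connectedness you scalarize over the dual cone $C^*=\{c: c_1\geq\cdots\geq c_n\}$ and walk through the normal fan; the gap you flag --- that every optimal face must be an $\argmax$ face for some $c$ with $c_1>\cdots>c_n$ --- is exactly the nontrivial step, and it is closed by the classical ``proper efficiency'' result for polyhedra (Isermann-type scalarization), or, as you suggest, by citing connectedness of the efficient set of a polyhedron under a polyhedral order (Naccache, Yu--Zeleny); note that since $C$ is linearly isomorphic to the nonnegative orthant via the partial-sum coordinates $s_k=\sum_{i\leq k}d_i$, those standard multiobjective-LP results apply verbatim. In short: the paper's route is shorter and exhibits the optimal set explicitly as a bounded subcomplex of $\derv(\lambda)-C$, while yours is more self-contained for the face claim and relies on (or re-proves) classical vector-optimization facts for the connectedness claim.
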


\begin{proof}
Let $P$ be the polytope $\derv(\lambda)$ and let $C$ be the cone of vectors infinitesimally majorizing the origin.
Then the optimal elements of $P$ are exactly the bounded faces of $P-C$,
and hence they form a subcomplex (cf.~\cite[Sec.~5.1]{Ziegler07}), which, by~\cite[Lem.~2.1]{Joswig01}, is connected.
\end{proof}

Furthermore, as we will see in the following section, for certain quasi-classical systems the optimal derivatives are always given by the same controls in the reduced control system, and do not depend on the state $\lambda$.

\section{OPTIMAL COOLING} \label{sec:opt-cooling}

Now we have all the tools we need to determine optimal cooling procedures for the $\mathrm V$-system and the spin-spin system (assuming Conjecture~\ref{conj:spin-spin}).
Note that in both cases Theorem~\ref{thm:asymptotic-coolability} shows immediately that the system is coolable.

\subsection{$\mathrm V$-system}

Using Corollary~\ref{coro:opt-dervs} one can show that the optimal derivatives are exactly the convex combinations of 
$$
\begin{pmatrix}
\gamma_2b+\gamma_1c\\
-\gamma_2b\\
-\gamma_1c
\end{pmatrix}
\text{ and }
\begin{pmatrix}
\gamma_1b+\gamma_2c\\
-\gamma_1b\\
-\gamma_2c
\end{pmatrix},
$$
where $\lambda=(a,b,c)\in\Delta^{n-1}_\down$. 
These correspond to the two topmost vertices in Figure~\ref{fig:inv-lambda-derivs}.
Note that the analogous result holds for the $\Lambda$-system, and hence our method also recovers the results of~\cite{Sklarz04} using a completely different approach.
Note that if $\gamma_1=\gamma_2$ the problem becomes trivial, so we assume that $\gamma_1<\gamma_2$.
A direct computation shows that the generators for these optimal derivatives commute and hence they can be applied in any order.
Hence following the first derivative for time $t_1$ and the second for time $t_2$ the final state is simply
$$
\begin{pmatrix}
a_0+(1-e^{-(\gamma_2t_1+\gamma_1t_2)})b_0+(1-e^{-(\gamma_1t_1+\gamma_2t_2)})c_0\\
e^{-(\gamma_2t_1+\gamma_1t_2)}b_0\\
e^{-(\gamma_1t_1+\gamma_2t_2)}c_0
\end{pmatrix}.
$$
Note that following these two derivatives it might happen that the second and third eigenvalue cross, but this does not change anything about the optimality of the derivatives.

In order to go any further we have to clarify the control task, since it necessarily takes infinite time to reach a pure state.
One natural choice is to minimize the time necessary to reach a certain largest eigenvalue, although other Schur-convex (or concave) functions such as those given in are also sensible.
Concretely the problem becomes, for any $0<\varepsilon<b_0+c_0$, to minimize $T=t_1+t_2$ under the conditions that $t_1,t_2\geq0$ and $e^{-(\gamma_2t_1+\gamma_1t_2)}b_0 + e^{-(\gamma_1t_1+\gamma_2t_2)}c_0=\varepsilon$.
Without the constraint $t_1,t_2\geq0$, an elementary computation shows that the optimal solution is
$$
\medmath{
t_1=\frac{\gamma_2\log(\frac{2b_0}\varepsilon)-\gamma_1\log(\frac{2c_0}\varepsilon)}{\gamma_2^2-\gamma_1^2}, \quad
t_2=\frac{\gamma_1\log(\frac{2b_0}\varepsilon)-\gamma_2\log(\frac{2c_0}\varepsilon)}{\gamma_1^2-\gamma_2^2}.}
$$
and the final state satisfies that $b(T)=c(T)=\varepsilon/2$.
However, $t_2$ becomes negative if $\varepsilon>2b_0(\frac{c_0}{b_0})^{\gamma_2/(\gamma_2-\gamma_1)}$.
In this case the optimal physical solution has $t_2=0$ and $t_1$ can be computed correspondingly.

This allows us to find the time-optimal controls for the task of reaching a largest eigenvalue of $1-\varepsilon$. 
We start by applying a (near) instantaneous unitary transformation to bring the state into diagonal form and with eigenvalues in weakly decreasing order. Then we wait for time $t_1$ without applying any controls (recall that the system is diagonally invariant and thus the compensating Hamiltonian vanishes). If $t_2=0$ we are done, otherwise we swap the second and third eigenvalue (near) instantaneously and wait for time $t_2$.

This solution is quite similar to that of~\cite{Sklarz04} for the \mbox{$\Lambda$-system}, except that we only switch the eigenvalues once.
Note also that in contrast to the approach of~\cite{Sklarz04}, we deduced the optimal solution instead of guessing it and we were able to prove optimality without the application of the Hamilton--Jacobi--Bellman equation.

\subsection{Spin-spin system}

Finally we determine an optimal cooling procedure for the spin-spin system.
Assuming Conjecture~\ref{conj:spin-spin} implies that the system is quasi-classical.
Corollary~\ref{coro:opt-dervs} allows us to determine the optimal derivatives.
It is easy to show that the only optimal derivatives are (convex combinations of)
$$
(b,-b,d,-d)^\top, \text{ and } (c,d,-c,-d)^\top.
$$

Conveniently, the corresponding generators again commute and hence their order of application is irrelevant.
Moreover it is clear that the system is diagonally invariant and hence the compensating Hamiltonian vanishes again.
Hence, similarly to the previous case, one can define a Schur-convex (or concave) cost function, such as purity, and find optimal times $t_1$ and $t_2$ via direct computation.
This is not much more difficult than in the previous case, but the resulting formulas are lengthy and not particularly enlightening, and hence omitted.

\section{CONCLUSION}

In this paper we applied the method of reduced control systems to Markovian quantum systems subject to fast unitary control in order to address the task of optimal cooling by determining the
corresponding provably time-optimal solutions.
This method allows for instance to derive a simple and efficiently computable characterization of asymptotically coolable systems.
Moreover using the Majorization Theorem, one can significantly simplify the search for optimal controls.
Concretely we studied three low dimensional systems, namely (i) general rank-one qubit systems, (ii) the $\mathrm V$-system on a qutrit and (iii) the spin-spin system consisting of two coupled qubits.
In each case we explicitly derived the optimal cooling solution and corresponding control functions.
The results go well beyond those obtained for instance in~\cite{Lapert10} and~\cite{Sklarz04}, 
all the while avoiding the Pontyagin Maximum Principle and the Hamilton--Jacobi--Bellman Equation.
At the same time this paper only scratches the surface of the topic of optimal cooling, and formulates a number of open problems, in particular a conjecture on the spin-spin system.
Although an essentially complete treatment of the qubit case is given in~\cite{QubitReduced24} by the author, the qutrit case, even restricted to systems with spontaneous emission, has so far only been solved in specific instances.
Overall this work presented novel 
powerful methods and demonstrated their effectiveness in concrete examples, paving the way for future advances also in experimental optimal cooling schemes.

\section{ACKNOWLEDGMENTS}

I would like to thank Thomas Schulte-Herbrüggen, Frederik vom Ende and Gunther Dirr as well as Brennan de Neeve and Florentin Reiter for their valuable feedback.

The project was funded i.a. by the Excellence Network of Bavaria under ExQM, by {\it Munich Quantum Valley} of the Bavarian State Government with funds from Hightech Agenda {\it Bayern Plus}.


\end{document}